\newcolumntype{L}[1]{>{\raggedright\let\newline\\\arraybackslash\hspace{0pt}}m{#1}}
\newcolumntype{C}[1]{>{\centering\let\newline\\\arraybackslash\hspace{0pt}}m{#1}}
\newcolumntype{R}[1]{>{\raggedleft\let\newline\\\arraybackslash\hspace{0pt}}m{#1}}
\newcommand\irregularcircle[2]{
  \pgfextra {\pgfmathsetmacro\len{(#1)+rand*(#2)}}
  +(0:\len pt)
  \foreach \a in {10,20,...,350}{
    \pgfextra {\pgfmathsetmacro\len{(#1)+rand*(#2)}}
    -- +(\a:\len pt)
  } -- cycle
}
\DeclareRobustCommand{\rvdots}{%
  \vbox{
    \baselineskip4\p@\lineskiplimit\z@
    \kern-\p@
    \hbox{.}\hbox{.}\hbox{.}
  }}
\newtheoremstyle{theoremstyle}
{10pt}      
{5pt}       
{\itshape}  
{}          
{\bfseries} 
{}         
{\newline}      
{}          
\newtheoremstyle{algorithmstyle}
{10pt}      
{5pt}       
{}  
{}          
{\bfseries} 
{}         
{ }      
{}          
\newtheoremstyle{examplestyle}
{10pt}      
{5pt}       
{}          
{}          
{\bfseries} 
{}         
{\newline}      
{}          
\theoremstyle{theoremstyle}
\newtheorem{theorem}{Theorem}[section]
\newtheorem*{theorem*}{Theorem}
\newtheorem{question}[theorem]{Question}
\newtheorem{lemma}[theorem]{Lemma}
\newtheorem{proposition}[theorem]{Proposition}
\newtheorem*{proposition*}{Proposition}
\newtheorem*{corollary*}{Corollary}
\theoremstyle{examplestyle}
\newtheorem{example}[theorem]{Example}
\newtheorem{definition}[theorem]{Definition}
\newtheorem{definition*}{Definition}
\newtheorem{remark*}{Remark}
\theoremstyle{algorithmstyle}
\newcommand{\s}{{\rm S}}
\newcommand{\M}{{\rm M}}
\newcommand{\N}{{\rm N}}
\newcommand{\p}{{\rm P}}
\newcommand{\K}{{\rm K}}
\newcommand{\CC }{\mathbb{C}}
\newcommand{\suchthat}{\;\ifnum\currentgrouptype=16 \middle\fi|\;}
\begin{document}

\parindent0cm

\title[Decoupled molecules with binding polynomials of bidegree $(n,2)$]{Decoupled molecules with \\ binding polynomials of bidegree $(n,2)$}
\author{Yue Ren}
\address{Max-Planck-Institut f\"ur Mathematik in den Naturwissenschaften\\
  Inselstraße 22\\
  04103 Leipzig\\
  Germany.
}
\email{yueren@mis.mpg.de}
\urladdr{http://personal-homepages.mis.mpg.de/yueren}
\author{Johannes W. R. Martini}
\address{KWS SAAT SE\footnote{this research project is not associated with KWS SAAT SE}\\
  Grimsehlstraße 31\\
  37574 Einbeck\\
Germany.
}
\email{johannes.martini@kws.com}
\author{Jacinta Torres}
\address{Max-Planck-Institut f\"ur Mathematik in den Naturwissenschaften\\
  Inselstraße 22\\
  04103 Leipzig\\
  Germany.
}
\email{jtorres@mis.mpg.de}
\urladdr{http://personal-homepages.mis.mpg.de/jtorres}

\date{\today}

\begin{abstract}
We present a result on the number of decoupled molecules for systems binding two different types of ligands. In the case of $n$ and $2$ binding sites respectively, we show that, generically, there are $2(n!)^{2}$ decoupled molecules with the same binding polynomial. For molecules with more binding sites for the second ligand, we provide computational results.
\end{abstract}

\maketitle

\section{Introduction}
In biology, a ligand is a substance that binds to a target molecule to serve a given purpose. A classical \cite{Bohr04,hasselbalch} and intensively studied \cite{Barcroft13,Hill13} example is oxygen, which binds reversibly to hemoglobin to be transported through the bloodstream. Reversible mutual binding of different molecules is also a key feature in biological signal transduction \cite{Changeux05,Cho96,Gutierrez09,ha16} and gene regulation \cite{Gutierrez12}.

A common model for describing equilibrium and steady states of a ligand $L$ binding to the sites of a target molecule $M$ comes from the grand canonical ensemble of statistical mechanics \cite{Ben01,Hill85,Schellman75,Wyman90}. The grand partition function, in our context also known as the \emph{binding polynomial}, arises as the denominator of the rational function describing the average number of occupied binding sites as a function of ligand activity. In the case of a target molecule with only one binding site, this rational function is given by
\[\Psi(\Uplambda) = \frac{a \Uplambda}{a\Uplambda + 1},\]
where the variable $\Uplambda$ denotes the activity of the ligand in the environment, and $a$ is a transformation of the binding energy depending on the temperature, which is usually assumed to be constant. This equation is also known as the (sigmoid) Henderson-Hasselbalch titration curve. Titration refers to the laboratory method used to obtain this curve. For systems of molecules with $n$ binding sites it generalizes to the Adair equation \cite{adair25,Stefan13}:
\[\Psi(\Uplambda) = \frac{ n a_n \Uplambda^n + (n-1)a_{n-1}  \Uplambda^{n-1}+ ... + a_1 }{ a_n \Uplambda^n + a_{n-1}  \Uplambda^{n-1}+ ... + a_1+ 1}.\]

In this model, the roots of the binding polynomial play an important role for the characterization of the binding behavior of the ligand to the target molecule \cite{Briggs83,Briggs84,Briggs85,Connelly86}. The rational functions of systems with $n$ binding sites can be represented as sums of $n$ Henderson-Hasselbalch curves \cite{Martinietal2013,novelview,Onufriev04}, which means that any given system of interacting binding sites can be represented by a hypothetical molecule consisting of stochastic independent binding sites \cite{Martini13meaning} possessing the same titration curve. The roots of the binding polynomial determine the binding energies of the independent pseudo-sites in this so-called \emph{decoupled sites representation}.

For two different types of ligands, the binding polynomial has two variables representing the activities of both ligands in the environment. Seeking an analogous decoupled sites representation leads to a version in which the binding sites for the same type of ligand do not interact, but interaction terms between the sites of different ligands remain \cite{Martini13II,Martini13I}. Contrary to the case of one type of ligand, where the decoupled sites representation is unique up to permutation of the roots, there are several different decoupled molecules. It has been shown previously that in the case of $n$ and $1$ binding sites for the two ligands, respectively, there are $n!$ decoupled molecules. The situation becomes more complicated for general systems of $n_1$ and $n_2$ binding sites. The main goal of this paper is to prove the following theorem.

\begin{theorem}
The decoupled molecules with $(n,2)$ sites with a fixed binding polynomial of bidegree $(n,2)$ are the solutions to a system of $3n+2$ unknowns: the $n+2$ binding energies and the $2n$ interaction energies. Generically, the number of complex solutions to this system equals $4(n!)^3$. These come in $2(n!)^{2}$ classes under relabeling of the sites.
 \end{theorem}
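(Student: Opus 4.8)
The plan is to make the system of the theorem explicit, peel off the coefficients of the binding polynomial in powers of the second activity one degree at a time, and reduce the genuinely coupled part to a single Laurent system whose solution count I compute with Bernstein's theorem.

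Write $x,y$ for the two activities and fix a target $\Phi(x,y)=P_0(x)+P_1(x)\,y+P_2(x)\,y^2$ of bidegree $(n,2)$. A decoupled molecule with first‑ligand sites $a_1,\dots,a_n$, second‑ligand sites $b_1,b_2$ and interaction coefficients $c_{ij}$ has, by summing over the four occupation states of the two second‑ligand sites,
\[\begin{gathered}
P_0=\prod_i(1+a_ix),\qquad P_2=b_1b_2\prod_i(1+a_ic_{i1}c_{i2}x),\\
P_1=b_1\prod_i(1+a_ic_{i1}x)+b_2\prod_i(1+a_ic_{i2}x).
\end{gathered}\]
Matching the $3n+2$ nonconstant coefficients of $\Phi$ against these expressions is exactly the asserted system in the $3n+2$ unknowns $a_i,b_j,c_{ij}$, which proves the first sentence. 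I would then set $\beta_i=a_ic_{i1}$ and $\gamma_i=a_ic_{i2}$, so that $P_1=b_1\prod_i(1+\beta_ix)+b_2\prod_i(1+\gamma_ix)$ and $P_2=b_1b_2\prod_i\!\bigl(1+\tfrac{\beta_i\gamma_i}{a_i}x\bigr)$.

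Next I strip the system level by level. The coefficients of $y^0$ give $\prod_i(1+a_ix)=P_0$, which for generic $\Phi$ has $n$ distinct nonzero roots and hence $n!$ labelled solutions $(a_1,\dots,a_n)$. The constant terms of $P_1$ and $P_2$ give $b_1+b_2$ and $b_1b_2$, fixing the unordered pair $\{b_1,b_2\}$ and contributing a factor $2$. The remaining coefficients of $y^2$ determine the multiset $\{\beta_i\gamma_i/a_i\}$ as the reciprocal roots of $P_2/(b_1b_2)$; attaching one of these to each site is a further $n!$ labelled choices, after which the products $p_i:=\beta_i\gamma_i$ are known constants. I would verify that for generic $\Phi$ all these roots are distinct and nonzero, so these $n!\cdot 2\cdot n!$ discrete choices are genuinely distinct and partition the solution set.

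What remains, with $a_i,b_1,b_2,p_i$ now fixed and $\gamma_i=p_i/\beta_i$, is to count the tuples $(\beta_i)\in(\mathbb{C}^\ast)^n$ solving the coefficients of $y^1$, i.e.\ the square Laurent system
\[
b_1\,e_k(\beta_1,\dots,\beta_n)+b_2\,e_k\!\left(\tfrac{p_1}{\beta_1},\dots,\tfrac{p_n}{\beta_n}\right)=[x^k]P_1,\qquad k=1,\dots,n,
\]
where $e_k$ is the $k$‑th elementary symmetric polynomial. The plan is to apply Bernstein's theorem: the generic number of solutions in $(\mathbb{C}^\ast)^n$ equals the mixed volume of the Newton polytopes $\mathcal{P}_k=\Conv\bigl(\{0\}\cup\{\mathbf{1}_S,-\mathbf{1}_S:|S|=k\}\bigr)$, and I would prove that this mixed volume equals $2\,n!$ (one checks $\mathcal{P}_1$ is the cross‑polytope and the pattern directly for small $n$). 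Combining the factors, the whole system has $n!\cdot 2\cdot n!\cdot 2n!=4(n!)^3$ labelled solutions. Finally, the relabeling group is $S_n\times\mathbb{Z}/2$, permuting the triples $(a_i,c_{i1},c_{i2})$ and swapping the two second‑ligand sites (so $b_1\leftrightarrow b_2$, $c_{i1}\leftrightarrow c_{i2}$); it has order $2\,n!$ and acts freely on the generic solution set, giving $4(n!)^3/(2\,n!)=2(n!)^2$ classes.

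I expect the main obstacle to be the core count of the last paragraph: both evaluating the mixed volume of the $\mathcal{P}_k$ in closed form and, more delicately, showing the Bernstein bound is \emph{attained} for the special coefficient pattern that occurs here — all positive‑orthant monomials of a given degree share the coefficient $b_1$ — equivalently that no solutions escape to the toric boundary; an induction eliminating $\prod_i\beta_i$ via the top coefficient is a possible alternative. A secondary, more routine obstacle is to pin down the genericity hypotheses (distinctness and nonvanishing of all roots, and freeness of the group action) under which every count above is exact.
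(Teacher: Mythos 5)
Your reduction is correct and is a genuinely different route from the paper's: you stratify the system by peeling off the $y^0$-, $y^2$- and constant-term data (contributing the discrete factors $n!\cdot 2\cdot n!$) and isolate a residual square Laurent system in the $\beta_i$, whereas the paper instead degenerates to ``normalized'' molecules ($g_i=g_A=g_B=1$, $w_{i,A}w_{i,B}=1$), shows by an explicit $2{:}1$ map on elementary symmetric functions that the normalized system has $2n!$ solutions each of multiplicity $2(n!)^2$, and then transfers the count $2n!\cdot 2(n!)^2=4(n!)^3$ to generic parameters by conservation of multiplicity under homotopy continuation (using crucially that the parameters $a_{i,j}$ occur only as constant terms, so no solution paths diverge). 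Your approach, if completed, would avoid that numerical-algebraic-geometry machinery entirely; but the entire quantitative content of the theorem now rests on the core count $2n!$ for
\[
b_1\,e_k(\beta)+b_2\,e_k\!\left(\tfrac{p_1}{\beta_1},\dots,\tfrac{p_n}{\beta_n}\right)=q_k,\qquad k=1,\dots,n,
\]
and that count is exactly what you have not proved. This is a genuine gap, not a routine verification, for two reasons. First, the closed-form evaluation $MV(\mathcal P_1,\dots,\mathcal P_n)=2\,n!$ is only checked for small $n$ (it does hold for $n=2$, where a direct computation gives $4$, but note that the naive bound via $\mathcal P_k\subseteq\Conv(0,\Delta_k)+\Conv(0,-\Delta_k)$ and multilinearity overshoots, giving $6$, so the evaluation is not formal). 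Second, and more seriously, Bernstein's theorem with your highly structured coefficients — all monomials $\beta_S$, $|S|=k$, sharing the coefficient $b_1$, and all $\beta_S^{-1}$ sharing $b_2$ up to the factors $p_S$ — yields only the \emph{upper} bound; attaining it requires showing that no facial subsystem has a toric zero, and without that you have only $\le 4(n!)^3$ with no matching lower bound.

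Be aware that the device which makes the analogous count work in the paper does not transfer to your core system: the paper substitutes $\mu_k=e_k(w)$ and uses $e_k(1/w)=\mu_{n-k}/\mu_n$ to collapse the system to an explicit $2{:}1$ dominant map $\CC^n\dashrightarrow\CC^n$, but for general $p_i$ the quantity $e_k(p_1/\beta_1,\dots,p_n/\beta_n)=\bigl(\sum_{|S|=k}p_S\,\beta_{S^c}\bigr)/e_n(\beta)$ is no longer a function of the $e_j(\beta)$ alone, so the trick is unavailable. The natural repairs are either a full facial (Newton-nondegeneracy) analysis of your Laurent system, or a degeneration $p_i\to 1$, $b_1,b_2\to 1$ to the paper's normalized system followed by a continuity-of-multiplicity argument — but the latter essentially reconstructs the paper's homotopy step, so you would not have gained an independent proof. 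A secondary, fixable point: you should check that genericity of the binding polynomial implies genericity of each of the $n!$ induced parameter tuples $(p_1,\dots,p_n)$ (one for each assignment of the reciprocal roots of $P_2/(b_1b_2)$ to sites), since these assignments are \emph{not} related by the relabelling symmetry and the core count must be verified for every one of them.
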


The article is structured as follows: In Section~\ref{sec:backgroundAndFramework}, we recall the definition of the binding polynomial and formulate the central question addressed in this work. In Section~\ref{sec:numericalAlgebraicGeometry}, we recall some results and techniques of numerical algebraic geometry, which are necessary to prove the main theorem in Section~\ref{sec:genericDecouplesMolecules}. We conclude the article with some experimental results in Section~\ref{sec:furtherExperimentalResults} and some open questions in Section~\ref{sec:summaryAndOpenQuestion}.

\section*{Acknowledgements}
The authors would like to thank Bernd Sturmfels for suggesting this collaboration and for his comments on a previous version of this paper, Corey Harris for his useful remarks and Jon Hauenstein for his advice regarding the \textsc{bertini} computations.

\section{Background and framework}\label{sec:backgroundAndFramework}

In this section, we briefly recap the algebraic framework as well as past results, and, in doing so, fix various notations. Most importantly, we introduce some shorthand notation for molecules with $(n,2)$ sites for Sections \ref{sec:numericalAlgebraicGeometry} and \ref{sec:genericDecouplesMolecules}.

\subsection{Single type of ligand}
The binding behaviour of systems with one type of ligand is governed by the energies required to bind to each site of the target molecule and the way different binding sites interact with each other. Following the notation of \cite{Martinietal2013}, we identify target molecules with these parameters.

\begin{definition}
  A \emph{molecule} $\M$ with $n$ sites for one type of ligand is a point
  \[ \M = (g_{1}, \cdots, g_{n}, w_{1,2}, w_{1,3}, \cdots, w_{n-1,n}) \in (\mathbb C^*)^n\times (\mathbb{C}^{*})^{\binom n 2}. \]
  The $g_{i}$ are called the \textit{binding energies} and the $w_{i,j}$ are called the \textit{interaction energies}; they measure, respectively, the energy at each site $i$ and the interaction energy between sites $i$ and $j$ (see Figure~\ref{fig:moleculeMonofunctional}). We call $M$ \emph{decoupled} if $w_{i,j}=1$ for all $1\leq i < j\leq n$.

 We will consider the natural $S_n$ action that corresponds to relabelling the sites:
 \[\sigma\cdot (g_{1}, \cdots, g_{n}, w_{1,2}, \cdots, w_{n-1,n}):=(g_{\sigma(1)}, \cdots, g_{\sigma(n)}, w_{\sigma(1),\sigma(2)}, \cdots , w_{\sigma(n-1),\sigma(n)})\]
 for $\sigma\in S_n$.
\end{definition}

\begin{figure}[ht]
  \centering
  \begin{tikzpicture}[scale=0.9]
    \pgfmathsetseed{13371337}
    \coordinate (o) at (0,0);
    \draw[rounded corners=.5mm] (o) \irregularcircle{1.75cm}{1mm};
    \draw (25:1.75cm) node[draw,fill=red!30] (o2) {};
    \draw (105:1.75cm) node[draw,fill=red!30] (c) {};
    \draw (195:1.75cm) node[draw,fill=red!30] (a) {};
    \draw (285:1.75cm) node[draw,fill=red!30] {};

    \node[anchor=west,xshift=0.5cm,font=\scriptsize] (o2text) at (o2) {red sites labelled $1,\ldots,4$ for $\text{O}_2$};
    \draw[->,densely dotted] ($(o2text.west)+(0.1cm,0)$) -- (o2);
    \node[anchor=east,xshift=-0.5cm,font=\footnotesize] (aText) at (a) {binding energies $g_i$};
    \draw[->,densely dotted] ($(aText.east)+(-0.15,0)$) -- (a);
    \draw[<->,densely dotted] (c) to[out=270,in=45] node[anchor=north west,xshift=-0.35cm,yshift=0.1cm,font=\footnotesize,text width=2.25cm,text centered] {interaction\\ energies $w_{i,j}$} (a);
  \end{tikzpicture}\vspace{-0.5cm}
  \caption{Hemoglobin with its 4 sites for oxygen.}
  \label{fig:moleculeMonofunctional}
\end{figure}
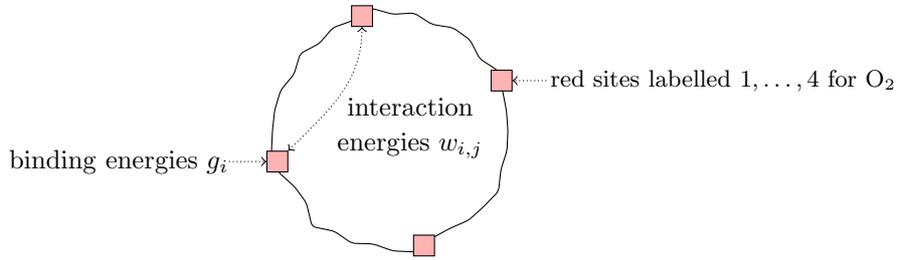

\begin{definition}
  Given a labelled target molecule $M$ with $n$ sites, we refer to $\K := \left\{0,1\right\}^{n}$ as the set of all microstates. To each microstate $k=(k_1,\ldots,k_n) \in \K$ we associated a microstate constant
  \[ g(k) := \prod_{i=1}^{n} \left(g_{i}^{k_{i}}\prod_{j=i+1}^{n}w_{i,j}^{{k}_{i}{k}_{j}}\right). \]
  The \emph{binding polynomial} is then defined as
  \[ \p_{M}(\Uplambda) = \sum_{k \in \K} g(k) \Uplambda^{|k|} \in \CC[\Uplambda]. \]
  It is a polynomial of degree $n$ with constant term $1$, and the map $M\mapsto \p_M$ is constant on the $S_n$ orbits, i.e. $\p_{\M}(\Uplambda) = \p_{\sigma(\M)}(\Uplambda)$ for every $\M \in (\mathbb{C}^{*})^{\frac{n(n+1)}{2}}$ and all $\sigma \in \s_{n}$.
\end{definition}

The following theorem is also known as the \textit{decoupled sites representation}. It implies that any molecule with real binding and interaction energies can be uniquely represented by a molecule with neutral interaction energy, provided that complex binding energies are allowed. Its proof consists of a reformulation of Vieta's formulas.

\begin{theorem}[{\cite[Proposition 2]{Martinietal2013}}]
  For any molecule $\N$ there exists a decoupled labelled target molecule $\M$, unique up to relabelling of the sites, such that $\p_{\M} = \p_{\N}$.
\end{theorem}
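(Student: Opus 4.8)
The plan is to exploit the special shape that the binding polynomial takes on a decoupled molecule, which reduces the entire statement to the factorisation of a single univariate polynomial over $\CC$. First I would record the crucial computation: if $\M = (g_{1},\ldots,g_{n},1,\ldots,1)$ is decoupled, then every $w_{i,j}=1$, so the microstate constant collapses to $g(k)=\prod_{i=1}^{n} g_{i}^{k_{i}}$, and summing over $\K=\{0,1\}^{n}$ gives
\[ \p_{\M}(\Uplambda) = \sum_{k\in\{0,1\}^{n}}\prod_{i=1}^{n}(g_{i}\Uplambda)^{k_{i}} = \prod_{i=1}^{n}\bigl(1+g_{i}\Uplambda\bigr). \]
Thus on decoupled molecules the binding polynomial is exactly the generating product of the elementary symmetric functions in $g_{1},\ldots,g_{n}$; matching its coefficients against those of $\p_{\N}$ is precisely Vieta's formulas, which is the reformulation promised in the text.

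For existence, I would note that $\p_{\N}$ has degree $n$ and constant term $g(0)=1$, so $\p_{\N}(0)=1\neq 0$ and $0$ is not a root. Over $\CC$ one therefore has $\p_{\N}(\Uplambda)=c\prod_{i=1}^{n}(\Uplambda-\rho_{i})$ with all $\rho_{i}\in\CC^{*}$, and normalising by the constant term (the condition $c\prod_{i}(-\rho_{i})=1$) rewrites this as $\p_{\N}(\Uplambda)=\prod_{i=1}^{n}\bigl(1-\rho_{i}^{-1}\Uplambda\bigr)$. Setting $g_{i}:=-\rho_{i}^{-1}\in\CC^{*}$ produces a decoupled molecule $\M=(g_{1},\ldots,g_{n},1,\ldots,1)\in(\mathbb{C}^{*})^{n}\times(\mathbb{C}^{*})^{\binom n2}$ which, by the factorisation of the previous step, satisfies $\p_{\M}=\p_{\N}$.

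For uniqueness up to relabelling, I would argue that the multiset $\{g_{1},\ldots,g_{n}\}$ is an invariant of $\p_{\N}$: by the fundamental theorem of algebra the roots $\{\rho_{i}\}$ of $\p_{\N}$ are uniquely determined as a multiset, hence so are the $g_{i}=-\rho_{i}^{-1}$. If $\M'=(g_{1}',\ldots,g_{n}',1,\ldots,1)$ is any other decoupled molecule with $\p_{\M'}=\p_{\N}$, then $\prod_{i}(1+g_{i}'\Uplambda)=\prod_{i}(1+g_{i}\Uplambda)$, and unique factorisation in $\CC[\Uplambda]$ forces $\{g_{i}'\}$ to equal $\{g_{i}\}$ as a multiset, giving a permutation $\sigma\in S_{n}$ with $g_{i}'=g_{\sigma(i)}$. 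Since every interaction energy of a decoupled molecule equals $1$ and is therefore fixed by the $S_{n}$ action, this says exactly $\M'=\sigma\cdot\M$, which is uniqueness up to relabelling.

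The argument is short, and the only genuine point to handle carefully is the normalisation ensuring that each binding energy $g_{i}$ lands in $\CC^{*}$ rather than merely in $\CC$; this is precisely where the constant term $1$ of $\p_{\N}$ (equivalently $g(0)=1$) is used, since it guarantees that $\p_{\N}$ has no zero root. Beyond that, the content is entirely the factorisation in the first display together with the fundamental theorem of algebra, so I do not expect a substantial obstacle — consistent with the paper's remark that the proof is merely a reformulation of Vieta's formulas.
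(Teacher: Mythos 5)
Your proof is correct and matches the approach the paper indicates: the paper does not spell out a proof (it cites \cite[Proposition 2]{Martinietal2013} and remarks that the proof is a reformulation of Vieta's formulas), and your argument --- factoring $\p_{\M}(\Uplambda)=\prod_{i=1}^{n}(1+g_{i}\Uplambda)$ for decoupled $\M$, then invoking the fundamental theorem of algebra and unique factorisation in $\CC[\Uplambda]$ --- is exactly that reformulation, with the nonvanishing of the constant term correctly used to keep each $g_{i}$ in $\CC^{*}$.
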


\subsection{Multiple types of ligands}
In case of $d>1$ types of ligands, we consider each binding site to be only able to take up to one type of ligand \cite{Martini13I}. This is sensible, as we can model a single binding site capable of binding to two types of ligands as two binding sites with interaction energies set so that the two sites can never be saturated at the same time.

For our purposes, let us assume that $d = 2$. We write $n_{1}$ and $n_{2}$ for the number of sites capable of binding to the first and second ligand, respectively.

\begin{definition}
  A \emph{molecule} $\M$ with $(n_1,n_2)$ sites is a point
  \[ \M = (g_{T_1}, \ldots, g_{T_{n_1}}, g_{S_1}, \ldots, g_{S_{n_2}}, (w_P)_{P\subset \{T_i,S_j\}, |P|=2}) \in (\mathbb C^*)^{n_1+n_2}\times (\mathbb{C}^{*})^{\binom {n_1+n_2} 2}, \]
  where $T_1,\ldots,T_{n_1},S_{1},\ldots,S_{n_2}$ represent the binding sites for ligand type $T$ and $S$ respectively (see Figure~\ref{fig:moleculeBifunctional}) and
  \begin{itemize}[leftmargin=*]
  \item $g_{T_1},\ldots,g_{T_{n_1}}$ and $g_{S_1},\ldots,g_{S_{n_2}}$ are the binding energies,
  \item $w_P$ for $P\subset\{T_1,\ldots,T_{n_1},S_1,\ldots,S_{n_2}\}$ with $|P|=2$ are the interaction energies.
  \end{itemize}
  We call $\M$ \emph{decoupled}, if $w_P=1$ for $P\subset \{T_1,\ldots,T_{n_1}\}$ and $P\subset\{S_1,\ldots,S_{n_2}\}$.

  Similar to the case $d=1$, there is a natural $S_{n_1}\times S_{n_2}$ action that corresponds to relabelling the sites.
\end{definition}

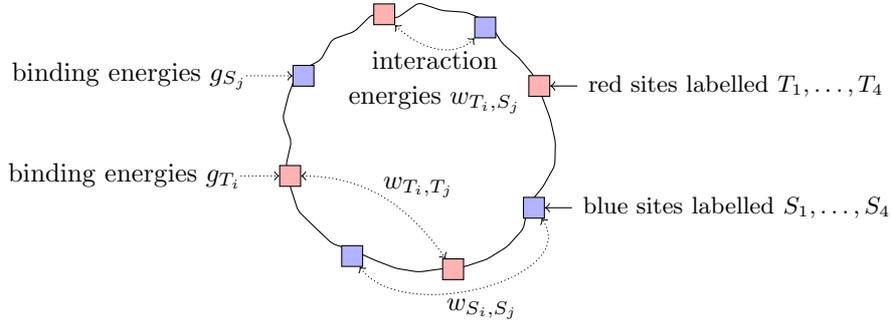
\begin{figure}[ht]
  \centering
  \begin{tikzpicture}
    \pgfmathsetseed{13379773}
    \coordinate (o) at (0,0);
    \draw[rounded corners=.5mm] (o) \irregularcircle{1.75cm}{1mm};
    \draw (25:1.75cm) node[draw,fill=red!30] (o2) {};
    \draw (60:1.75cm) node[draw,fill=blue!30] (b) {};
    \draw (105:1.75cm) node[draw,fill=red!30] (c) {};
    \draw (150:1.75cm) node[draw,fill=blue!30] (f) {};
    \draw (195:1.75cm) node[draw,fill=red!30] (a) {};
    \draw (240:1.75cm) node[draw,fill=blue!30] (d) {};
    \draw (285:1.75cm) node[draw,fill=red!30] (e) {};
    \draw (330:1.75cm) node[draw,fill=blue!30] (co2) {};

    \node[anchor=west,xshift=0.5cm,font=\scriptsize] (o2text) at (o2) {red sites labelled $T_1,\ldots,T_4$};
    \node[anchor=west,xshift=0.5cm,font=\scriptsize] (co2text) at (co2) {blue sites labelled $S_1,\ldots,S_4$};
    \draw[->] (o2text) -- (o2);
    \draw[->] (co2text) -- (co2);
    \node[anchor=east,xshift=-0.5cm,font=\footnotesize] (aText) at (a) {binding energies $g_{T_i}$};
    \draw[->,densely dotted] ($(aText.east)+(-0.15,0)$) -- (a);
    \node[anchor=east,xshift=-0.6cm,font=\footnotesize] (fText) at (f) {binding energies $g_{S_j}$};
    \draw[->,densely dotted] ($(fText.east)+(-0.15,0)$) -- (f);
    \draw[<->,densely dotted] (b) to[out=225,in=315] node[anchor=north,yshift=0.1cm,font=\footnotesize,text width=2.5cm,text centered] {interaction\\ energies $w_{T_i,S_j}$} (c);
    \draw[<->,densely dotted] (a) to[out=0,in=125] node[anchor=south west,xshift=-0.15cm,yshift=-0.15cm,font=\footnotesize] {$w_{T_i,T_j}$} (e);
    \draw[<->,densely dotted] (d) to[out=305,in=305] node[anchor=north,yshift=0cm,font=\footnotesize] {$w_{S_i,S_j}$} (co2);
  \end{tikzpicture}\vspace{-0.5cm}
  \caption{A molecule with (4,4) sites.}
  \label{fig:moleculeBifunctional}
\end{figure}

\begin{definition}
  Similarly to the case $d=1$, we can define the binding polynomial $\p_{\M}$ of a molecule $\M$. Explicitly, for decoupled molecules $\M$, $\p_{\M}$ is a bivariate polynomial in the two (ligand) variables $\Uplambda_{1}$ and $\Uplambda_{2}$,
  \[ \p_{\M}(\Uplambda_{1},\Uplambda_{2})  = \sum_{i=1,\ldots,{n_1}} \sum_{j=1,\ldots,n_2} a_{i,j} \Uplambda_{1}^i\Uplambda_{2}^j, \]
  where the coefficients $a_{i,j}$ are given by
  \begin{equation}
    \label{oursystem}
    a_{i,j} = \underset{|I| = i, \, |J|= j}{\underset{J \subset \{1, \cdots, n_{2}\}}{\underset{I \subset \{1, \cdots, n_{1}\}}{\sum}}} \;\prod_{T_i\in I} g_{T_i} \prod_{S_j\in J} g_{S_j} \underset{T_i \in J}{\underset{S_j \in I}{\prod}} w_{\{T_i,S_j\}}.
  \end{equation}
  It is a bivariate polynomial of bidegree $(n_1,n_2)$ with constant term $1$. Moreover, the map $\M\mapsto \p_\M$ is constant on the $S_{n_1}\times S_{n_2}$-orbits, i.e. $\p_{\M}(\Uplambda) = \p_{\sigma(\M)}(\Uplambda)$ for every $\M \in (\mathbb C^*)^{n_1+n_2}\times (\mathbb{C}^{*})^{\binom {n_1+n_2} 2}$ and all $\sigma \in \s_{n_1}\times\s_{n_2}$.
\end{definition}

In this case, the decoupled sites representation takes the following form.

\begin{theorem}[{\cite[Corollary 2]{Martini13I}}]
  For any molecule $\N$ with $(n,1)$ sites there exists, up to relabelling of the sites, and counted with multiplicity, $n!$ decoupled molecules $\M$ of the same type such that $\p_\N = \p_\M$.
\end{theorem}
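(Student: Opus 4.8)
The plan is to exploit the fact that a binding polynomial of bidegree $(n,1)$ splits into two univariate factors, each of which is handled by Vieta's formulas exactly as in the single-ligand case. First I would record the factored shape of the binding polynomial of a decoupled molecule with $(n,1)$ sites. Setting $n_1=n$, $n_2=1$ in the coefficient formula \eqref{oursystem} and collecting powers of $\Uplambda_2$ gives
\[
  \p_\M(\Uplambda_1,\Uplambda_2) \;=\; Q_0(\Uplambda_1) + \Uplambda_2\, Q_1(\Uplambda_1),
\]
where, upon abbreviating $h_k := g_{T_k}\, w_{\{T_k,S_1\}}$ for $1\le k\le n$, the coefficient $a_{i,0}$ equals $e_i(g_{T_1},\ldots,g_{T_n})$ and $a_{i,1}=g_{S_1}\, e_i(h_1,\ldots,h_n)$, with $e_i$ the $i$-th elementary symmetric polynomial. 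Hence
\[
  Q_0(\Uplambda_1)=\prod_{k=1}^n\bigl(1+g_{T_k}\Uplambda_1\bigr),
  \qquad
  Q_1(\Uplambda_1)=g_{S_1}\prod_{k=1}^n\bigl(1+h_k\Uplambda_1\bigr).
\]

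Next, given an arbitrary molecule $\N$ with $(n,1)$ sites, I would decompose its binding polynomial in the same way as $\p_\N = \widetilde Q_0 + \Uplambda_2 \widetilde Q_1$ with $\widetilde Q_0(0)=1$, and read off the data of any decoupled molecule $\M$ with $\p_\M=\p_\N$. Matching constant terms gives $g_{S_1}=\widetilde Q_1(0)=a_{0,1}$, determined uniquely. Factoring $\widetilde Q_0$, which has constant term $1$ and degree $n$, determines the multiset $\{g_{T_1},\ldots,g_{T_n}\}$ uniquely over $\mathbb{C}$ (this is precisely the single-ligand decoupled sites representation applied to the $\Uplambda_2=0$ slice). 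Likewise, factoring $\widetilde Q_1/g_{S_1}$ determines the multiset $\{h_1,\ldots,h_n\}$ uniquely. Thus $\p_\N$ pins down both unordered collections $\{g_{T_k}\}$ and $\{h_k\}$ together with $g_{S_1}$, and since the factorizations always exist over $\mathbb{C}$, existence of at least one decoupled representation is automatic. What the binding polynomial does not record is how the two collections are matched.

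The decoupled molecule, however, is exactly a matching of these two multisets: each site $T_k$ carries the pair $(g_{T_k},w_{\{T_k,S_1\}})$, equivalently $(g_{T_k},h_k)$, and modulo the relabelling $\s_n$-action a molecule is nothing but the multiset of such pairs together with $g_{S_1}$. In the generic case all $g_{T_k}$ are pairwise distinct and all $h_k$ are pairwise distinct, so a matching is a bijection $\pi\in\s_n$ sending $g_{T_k}\mapsto h_{\pi(k)}$; the resulting interaction energies $w_{\{T_k,S_1\}}=h_{\pi(k)}/g_{T_k}$ are all nonzero, so each $\pi$ yields a genuine decoupled molecule in $(\mathbb{C}^{*})^{2n+1}$. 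Distinct bijections produce distinct multisets of pairs — since the $g_{T_k}$ are distinct, the first coordinates identify the pairs, and distinctness of the $h_k$ then forces $\pi=\pi'$ — so there are exactly $n!$ decoupled molecules up to relabelling.

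The remaining, and most delicate, point is the phrase \emph{counted with multiplicity} for an arbitrary, possibly degenerate, $\N$. When the roots of $\widetilde Q_0$ or of $\widetilde Q_1$ collide, some of the $n!$ matchings coincide, and one must check that they fuse with the correct multiplicities so that the total is still $n!$. I would handle this by regarding $\M\mapsto\p_\M$ as a finite morphism on the quotient by $\s_n$, identifying the generic fibre cardinality $n!$ as its degree, and then invoking conservation of number together with the continuity of roots along a deformation from the generic configuration to the given special $\N$. This degeneration bookkeeping, rather than the elementary generic count, is where the real care is required.
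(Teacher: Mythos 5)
This theorem is one the paper does not prove at all: it is quoted verbatim from \cite[Corollary 2]{Martini13I}, so there is no internal argument to compare yours against, and your proposal must be judged on its own merits. It holds up. The factorization $\p_\M=\prod_{k}(1+g_{T_k}\Uplambda_1)+g_{S_1}\Uplambda_2\prod_k(1+h_k\Uplambda_1)$ with $h_k=g_{T_k}w_{\{T_k,S_1\}}$ is exactly right, the three pieces of data $\{g_{T_k}\}$, $\{h_k\}$, $g_{S_1}$ are correctly read off from $\p_\N$ (note that $a_{n,0}\neq 0$, $a_{0,1}\neq 0$ and $a_{n,1}\neq 0$ because the energies of $\N$ lie in $\CC^*$; this is what guarantees that both univariate factors have degree exactly $n$ and only nonzero roots, hence that the reconstructed energies land in $(\CC^*)^{2n+1}$), and identifying decoupled molecules modulo relabelling with matchings of the two multisets gives the generic count of $n!$. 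Two remarks on your final paragraph, which you rightly single out as the delicate part. First, ``conservation of number'' requires the map to be finite (proper) over the relevant parameter locus; this is true here, because over the locus $a_{n,0}\,a_{0,1}\,a_{n,1}\neq 0$ the roots of both factors stay bounded and bounded away from $0$, but you assert finiteness rather than verify it, and without properness solutions could in principle escape to the coordinate hyperplanes, invalidating the count. Second, there is a cleaner route that avoids degeneration entirely and mirrors how the paper itself handles multiplicities in the $(n,2)$ case (Lemma~\ref{lem:normalized}): the substitution $(\underline g,\underline w)\mapsto(\underline g,\underline h)$ is an automorphism of the torus $(\CC^*)^{2n}$, after which the system splits into two \emph{independent} Vieta systems $e_i(g_{T_1},\ldots,g_{T_n})=a_{i,0}$ and $e_i(h_1,\ldots,h_n)=a_{i,1}/a_{0,1}$. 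Each of these has total multiplicity $n!$ over every admissible target (Example~\ref{ex:vieta}), so by the product formula for multiplicities \cite[Proposition~1.29]{EisenbudHarris16} the ordered solutions always total $(n!)^2$ counted with multiplicity, and dividing by the order $n!$ of the relabelling group yields the asserted $n!$ classes --- the same convention by which the paper passes from $4(n!)^3$ solutions to $2(n!)^2$ classes in Theorem~\ref{thm:generic}.
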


\subsection{Decoupled molecules with $(n,2)$ sites}
\label{2.3}
The main focus of this article are decoupled molecules with $(n,2)$ sites, for which we will simplify the notation as follows: instead of $T_1,\ldots,T_n$, we label the $n$ binding sites of the first type with $1,\ldots,n$, and, instead of $S_1,S_2$, we label the two binding sites of the second type with $A,B$ (see Figure~\ref{fig:molecule(n,2)}), so that
\begin{itemize}[leftmargin=*]
\item $g_1,\ldots,g_n,g_A,g_B$ represent the binding energies,
\item $w_{1,A},\ldots,w_{n,A},w_{1,B},\ldots,w_{n,B}$ represent the non-trivial interaction energies.
\end{itemize}
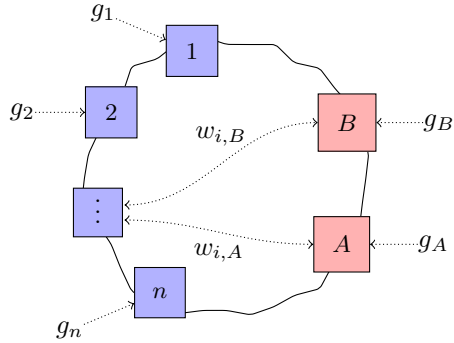
\begin{figure}[ht]
  \centering
  \begin{tikzpicture}[every node/.style={font=\scriptsize},square/.style={regular polygon,regular polygon sides=4}]
    \pgfmathsetseed{13379999}
    \coordinate (o) at (0,0);
    \draw[rounded corners=.5mm] (o) \irregularcircle{1.85cm}{1mm};
    \draw (330:1.75cm) node[square,draw,fill=red!30] (a) {$A$};
    \draw (25:1.75cm) node[square,draw,fill=red!30] (b) {$B$};
    \draw (105:1.75cm) node[square,draw,fill=blue!30] (1) {$1$};
    \draw (150:1.75cm) node[square,draw,fill=blue!30] (2) {$2$};
    \draw (195:1.75cm) node[square,draw,fill=blue!30,inner sep=2.5pt] (i) {$\rvdots$};
    \draw (240:1.75cm) node[square,draw,fill=blue!30] (n) {$n$};

    \node[anchor=west,xshift=1cm,font=\footnotesize,inner sep=0pt] (atext) at (a) {$g_A$};
    \node[anchor=west,xshift=1cm,font=\footnotesize,inner sep=0pt] (btext) at (b) {$g_B$};
    \draw[->,densely dotted] (atext) -- (a);
    \draw[->,densely dotted] (btext) -- (b);

    \node[anchor=east,xshift=-1cm,yshift=0.5cm,inner sep=0pt,font=\footnotesize] (1text) at (1) {$g_1$};
    \node[anchor=east,xshift=-1cm,inner sep=0pt,font=\footnotesize] (2text) at (2) {$g_2$};
    \node[anchor=east,xshift=-1cm,yshift=-0.5cm,inner sep=0pt,font=\footnotesize] (ntext) at (n) {$g_n$};
    \draw[->,densely dotted] (1text) -- (1);
    \draw[->,densely dotted] (2text) -- (2);
    \draw[->,densely dotted] (ntext) -- (n);

    \draw[<->,densely dotted] ($(i)+(0.35,-0.1)$) to[out=0,in=180] node[anchor=north,font=\footnotesize] {$w_{i,A}$} (a);
    \draw[<->,densely dotted] ($(i)+(0.35,0.1)$) to[out=0,in=180] node[anchor=south,yshift=0.1cm,font=\footnotesize] {$w_{i,B}$} (b);
  \end{tikzpicture}\vspace{-0.25cm}
  \caption{A decoupled molecule with (n,2) sites.}
  \label{fig:molecule(n,2)}
\end{figure}

The formulas for the coefficients of the binding polynomial then simplify to the polynomials in System (\ref{thesystem}).
For an explicit instance of the equations and their solutions, see Section~\ref{sec:explicitSolutions}.
\begin{figure}[ht]
  \centering
  \begin{equation}
  \label{thesystem}
  \begin{aligned}
    a_{1,0} &= g_1+\ldots+g_n, \\[-0.2cm]
    &\;\,\vdots \\[-0.2cm]
    a_{n,0} &= g_1\cdot \ldots\cdot g_n, \\
    a_{0,1} &= g_A + g_B,\\
    a_{1,1} &= g_A (g_1w_{1,A}+\ldots+g_nw_{n,A})+ g_B (g_1w_{1,B}+\ldots+g_nw_{n,B}),\\
    a_{2,1} &= g_A (g_1g_2w_{1,A}w_{2,A}+\ldots+g_{n-1}g_nw_{n-1,A}w_{n,A})\\
    &\qquad +g_B (g_1g_2w_{1,B}w_{2,B}+\ldots+g_{n-1}g_nw_{n-1,B}w_{n,B}), \\[-0.5cm]
    &\;\,\vdots \\[-0.2cm]
    a_{n,1} &= g_A g_1\cdots g_nw_{1,A}\cdots w_{n,A} + g_B g_1\cdots g_nw_{1,B}\cdots w_{n,B},\\
    a_{0,2} &= g_Ag_B,\\
    a_{1,2} &= g_Ag_B (g_1 w_{1,A}w_{1,B}+\ldots+g_n w_{n,A}w_{n,B}),\\
    a_{2,2} &= g_Ag_B (g_1 w_{1,A}w_{1,B}g_2 w_{2,A}w_{2,B}+\ldots+g_{n-1} w_{n-1,A}w_{n-1,B}g_n w_{n,A}w_{n,B}), \\[-0.2cm]
    &\;\,\vdots \\[-0.2cm]
    a_{n,2} &= g_Ag_B g_1 w_{1,A}w_{1,B}\cdots g_n w_{n,A}w_{n,B}.
  \end{aligned}
\end{equation}\vspace{-0.5cm}
\caption{Coefficients of the binding polynomial of bidegree (n,2).}
\label{fig:binding2}
\end{figure}
We denote the pair set of decoupled molecules with $(n,2)$ sites and their binding polynomials of bidegree $(n,2)$ (ignoring its constant term $1$) by
\[ \mathcal M = \left\{ (\underline g,\underline w;\underline a)\in (\mathbb C^*)^{n+2}\times (\mathbb{C}^{*})^{n\cdot 2} \times \mathbb C^{(n+1)\cdot (2+1)-1}\suchthat (\underline g,\underline w;\underline a) \text{ satisfies System (\ref{thesystem})}\right\}.\]

\pagebreak
\section{Numerical algebraic geometry}\label{sec:numericalAlgebraicGeometry}

In this section we recall some basic notions of numerical algebraic geometry and its main workhorse: homotopy continuation. For that, we regard $\mathcal M$ as the kernel of the polynomial map
\[ f: \underbrace{(\mathbb C^*)^{n+2}\times (\mathbb{C}^{*})^{n\cdot 2}}_{=:X} \times \underbrace{\mathbb C^{(n+1)\cdot (2+1)-1}}_{=:Y} \rightarrow \mathbb C^{(n+1)\cdot (2+1)-1} \]
with
\[
  f(\underline g,\underline w; \underline a) =
  \begin{pmatrix}
     g_1+\ldots+g_n - a_{1,0} \\
     \vdots \\[-0.25cm]
     \vdots
  \end{pmatrix},
\]
where $\underline g:=(g_1,\ldots,g_n,g_A,g_B)$ and $\underline w:=(w_{1,A},\ldots,w_{n,A},w_{1,B},\ldots,w_{n,B})$ are referred to as \emph{unkowns}, and $\underline a=(a_{1,0},\ldots,a_{n,2})$ are regarded as \emph{parameters}. We fix a projection
\[ \pi_Y : X\times Y\longrightarrow Y. \]

One fundamental and important concept is that solutions vary continuously in the parameters, which is summarized in the following theorem.

\begin{theorem}[{\cite[Theorem A.14.1]{SW05}}]\label{thm:continuity}
  If there is an isolated solution $(\underline g^*,\underline w^*;\underline a^*)\in X\times Y$ of $f(\underline g,\underline w;\underline a^*)=0$, then there are arbitrarily small euclidean open sets $\mathcal U\subset X$ that contain $(\underline g^*,\underline w^*)$ and such that
  \begin{enumerate}[leftmargin=*]
  \item $(\underline g^*,\underline w^*;\underline a^*)$ is the only solution of $f(\underline g,\underline w;\underline a^*)=0$ in $\mathcal U\cap (X\times \{\underline a^*\})$;
  \item $f(\underline g,\underline w;\underline a')=0$ has only isolated solutions for $a'\in \pi_{Y}(\mathcal U)$ and $(\underline g,\underline w)\in \mathcal U\cap (X\times \{\underline a^*\})$;
  \item the multiplicity of $(\underline g^*,\underline w^*;\underline a^*)$ as a solution of $f(\underline g,\underline w;\underline a^*)=0$ equals the sum of the multiplicities of the isolated solutions of $f(\underline g,\underline w;\underline a')=0$ for $a'\in \pi_{Y}(\mathcal U)$ and $(\underline g,\underline w)\in \mathcal U\cap (X\times \{\underline a^*\})$
  \end{enumerate}
\end{theorem}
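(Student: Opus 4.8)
The plan is to treat this as a purely local statement about an isolated zero of a \emph{square} holomorphic system and to use the topological degree (the multidimensional argument principle) as the conserved integer. First I would record the dimension count: writing $N:=(n+1)(2+1)-1=3n+2$ for the number of equations, one has $\dim_{\mathbb C}X=(n+2)+2n=3n+2=N$, so for each fixed parameter $\underline a$ the restricted map $f(\,\cdot\,;\underline a)\colon X\to\mathbb C^{N}$ is a square holomorphic system in $N$ variables. Hence every isolated zero of $f(\,\cdot\,;\underline a)$ carries a well-defined local multiplicity $\mult$, equal to the $\mathbb C$-dimension of the Artinian local ring $\mathcal O_{X,(\underline g,\underline w)}/(f_1,\ldots,f_N)$ and equivalently to the local topological degree of $f(\,\cdot\,;\underline a)$ at the point. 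Throughout I would work with a product neighborhood $\mathcal U=\mathcal U_X\times V$, where $\mathcal U_X\subset X$ is a euclidean ball about $(\underline g^*,\underline w^*)$ and $V\subset Y$ is a ball about $\underline a^*$, so that $\mathcal U\cap(X\times\{\underline a^*\})=\mathcal U_X\times\{\underline a^*\}$ and $\pi_Y(\mathcal U)=V$.

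Next I would establish (1) together with the boundary estimate. Since $(\underline g^*,\underline w^*)$ is isolated in the fiber $f(\,\cdot\,;\underline a^*)^{-1}(0)$, I choose $\mathcal U_X$ small enough that its closure $\overline{\mathcal U_X}$ meets this fiber \emph{only} at $(\underline g^*,\underline w^*)$; in particular $f(\,\cdot\,;\underline a^*)$ has no zero on the boundary sphere $\partial\mathcal U_X$, and $\mathcal U_X\times\{\underline a^*\}$ contains the unique solution $(\underline g^*,\underline w^*;\underline a^*)$, which is assertion (1). Any smaller ball works equally well, supplying the ``arbitrarily small'' clause. As $\partial\mathcal U_X$ is compact, $|f(\,\cdot\,;\underline a^*)|$ attains a positive minimum $\delta>0$ there; by joint continuity of $f$ in $(\underline g,\underline w,\underline a)$ and compactness, there is a ball $V\ni\underline a^*$ with $|f(\underline g,\underline w;\underline a')|>\delta/2>0$ for all $(\underline g,\underline w)\in\partial\mathcal U_X$ and all $\underline a'\in V$. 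Thus no deformed system $f(\,\cdot\,;\underline a')$ with $\underline a'\in V=\pi_Y(\mathcal U)$ has a zero on $\partial\mathcal U_X$.

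I would then prove (2) and (3) simultaneously by homotopy invariance of the degree. For each $\underline a'\in V$ the normalized boundary map $f(\,\cdot\,;\underline a')/|f(\,\cdot\,;\underline a')|\colon\partial\mathcal U_X\to S^{2N-1}$ is well defined, and as $\underline a'$ ranges over the connected ball $V$ these maps constitute a homotopy; hence $\deg\big(f(\,\cdot\,;\underline a')/|f(\,\cdot\,;\underline a')|\big)$ is independent of $\underline a'$. By the multidimensional argument principle this degree equals the total multiplicity of the zeros of $f(\,\cdot\,;\underline a')$ lying in $\mathcal U_X$; evaluating at $\underline a'=\underline a^*$ identifies the common value with $\mult(\underline g^*,\underline w^*;\underline a^*)$, which is (3). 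Finiteness of this degree forces the zero set of $f(\,\cdot\,;\underline a')$ in $\mathcal U_X$ to be a finite set of isolated points, since a positive-dimensional component would make the local contribution ill-defined and violate the finite boundary degree; this gives (2).

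The hard part will be the rigorous identification of the topological boundary degree with the sum of the \emph{algebraic} local multiplicities of the interior zeros, uniformly in the parameter and \emph{without} assuming those zeros are simple. I would handle this either through the residue-theoretic form of the multidimensional argument principle (Griffiths--Harris / Grothendieck local duality), which writes each local multiplicity as a Grothendieck residue and sums these into the boundary degree, or through an algebraic \emph{conservation of number} argument: near $(\underline g^*,\underline w^*;\underline a^*)$ the projection $\pi_Y$ restricted to $\mathcal M$ has $0$-dimensional fibers over the regular base $Y$ inside the smooth ambient space $X\times Y$, so miracle flatness makes $\pi_Y|_{\mathcal M}$ flat there, and flatness keeps the total fiber length constant. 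Reconciling the analytic (degree) and algebraic (length) descriptions of $\mult$, and verifying the flatness hypotheses at a possibly non-reduced point, is the delicate step.
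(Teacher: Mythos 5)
You should first know that the paper contains no proof of this statement to compare against: Theorem~\ref{thm:continuity} is imported verbatim from \cite[Theorem~A.14.1]{SW05} and used as a black box. Judged on its own merits, your proposal is essentially the standard proof of this ``conservation of number'' principle: reduce to a square holomorphic system (your dimension count $N=3n+2=\dim_{\mathbb C}X$ is correct), take a product neighborhood $\mathcal U_X\times V$ (which also silently repairs the statement's sloppy phrasing ``$\mathcal U\subset X$'' --- the set must live in $X\times Y$ for $\pi_Y(\mathcal U)$ to make sense), obtain a positive lower bound for $|f(\,\cdot\,;\underline a^*)|$ on $\partial\mathcal U_X$, propagate it to nearby parameters by compactness, and invoke homotopy invariance of the boundary degree together with the identification of local topological degree with algebraic multiplicity. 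Both routes you offer for that last identification are viable; note moreover that in the situation at hand the flatness route is easier than you suggest, because the parameters $\underline a$ enter System~(\ref{thesystem}) only as constant terms, so $\mathcal M$ is the graph of a polynomial map $h:X\to Y$, hence smooth, and $\pi_Y|_{\mathcal M}$ is identified with $h$ itself --- exactly the observation the paper exploits in the proof of Theorem~\ref{thm:generic}.

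One step, however, is not justified as written: your argument for claim~(2). You assert that a positive-dimensional component of the zero set inside $\mathcal U_X$ ``would make the local contribution ill-defined and violate the finite boundary degree.'' Degree theory alone does not rule this out: for merely smooth maps, a well-defined nonzero boundary degree coexists happily with positive-dimensional zero sets in the interior. What rules it out here is holomorphy. The zero set of $f(\,\cdot\,;\underline a')$ is a complex analytic subset of $\mathcal U_X$, and a positive-dimensional analytic subset of a bounded open subset of $\mathbb C^N$ cannot be relatively compact in it (a compact positive-dimensional analytic subset of $\mathbb C^N$ would contradict the maximum principle applied to the coordinate functions). Hence any such component would have closure points on $\partial\mathcal U_X$, and these would be zeros of $f(\,\cdot\,;\underline a')$ by continuity, contradicting your boundary estimate. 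With this replacement the zero set in $\mathcal U_X$ is a compact analytic set, hence finite; claim~(2) follows, and the additivity of the degree over the finitely many interior zeros --- which your homotopy argument needs in order even to speak of ``the total multiplicity'' --- becomes available. In short: right approach, one genuinely faulty justification, with a standard repair.
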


\begin{example}[Vieta's Formula]\label{ex:vieta}
  Consider the first $n$ components of our polynomial map, which are given by (abbreviating $a_i:=a_{i,0}$):
  \[
    f(g_1,\ldots,g_n;a_1,\ldots,a_n) :=
    \begin{pmatrix}
      g_1+\ldots+g_n-a_1 \\
      g_1g_2 + g_1g_3 + \ldots + g_{n-1}g_n-a_2 \\
      \vdots \\
      g_1\cdot \ldots \cdot g_n - a_n
    \end{pmatrix}.
  \]
  Given any parameter $\underline a'\in \CC^n$, Vieta's formula states that any solution $\underline g'\in \CC^n$ to $f(\underline g;\underline a')=0$ consists of the roots of the univariate polynomial $t^n+a_1' t^{n-1} + \ldots + a_n'$. Hence there exist a Zariski-open set $\mathcal U:= \CC^n \setminus\text{Disc}_x(x^n+a_1 x^{n-1} + \ldots + a_n)$ such that for any $\underline a'\in \mathcal U$ there are $n!$ distinct simple solutions to $f(\underline g;\underline a')=0$. We say that there are \emph{generically} $n!$ solutions and refer to $\underline a'\in \mathcal U$ as a \emph{generic} choice of parameters.

  Should $x^n+a_1' x^{n-1} + \ldots + a_n'=(x-1)^n$, then the only solution is $\underline g'=(1,\ldots,1)$. Theorem \ref{thm:continuity} implies that this solution is of multiplicity $n!$. This will be important in the proof of Lemma \ref{lem:normalized}.
\end{example}

The arguably most essential tool in numerical algebraic geometry is path tracking. That is given
\begin{itemize}[leftmargin=*]
\item a starting solution $(g',w';a')\in X\times Y$
\item a target parameter $a^*\in Y$
\item a continuous path $\phi:[0,1]\rightarrow Y$ with $\phi(1)=a'$ and $\phi(0)=a^\ast$
\end{itemize}
there exist, under certain circumstances \cite[Theorem 7.1.6]{SW05}, a \emph{solution path}
\[ z: (0,1] \rightarrow X \text{ with } z(1)=(g',w') \text{ and } f(z(t),\phi(t))=0.\]
However, the solution path might diverge, which is why these problems are commonly studied in a projective framework.

\begin{example}[solutions at infinity]
  The simplest example of diverging solution path is the function
  \[ f: \CC \times \CC \longrightarrow \CC,\quad f(x;a)=ax^2-x, \]
  with two starting solutions $(0;1)$, $(1;1)$, the target parameter $0$ and the continuous, straight-line path $\phi: [0,1] \rightarrow \CC, t\mapsto 1-t$, see Figure~\ref{fig:solutionsAtInfinity}.
  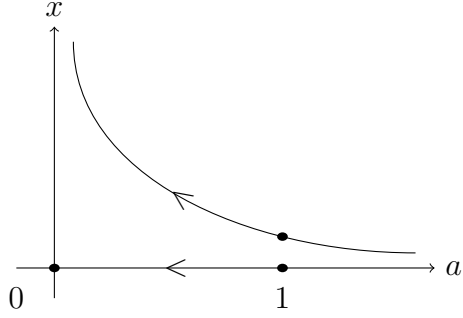
\begin{figure}[ht]
    \centering
    \begin{tikzpicture}[yscale=0.8]
        \draw[->] (-0.5, 0) -- (5, 0) node[right] {$a$};
        \draw[->] (0, -0.5) -- (0, 4) node[above] {$x$};
        \node at (1.6,0) {$<$};
        \draw (4.75,0.25) to[out=180,in=270] node[sloped] {$<$} (0.25,3.75);
        \fill (3,0) circle (2pt);
       \node at (3,-0.5) {$1$};
        \fill (0,0) circle (2pt);
       \node at (-0.5,-0.5) {$0$};
        \fill (3,0.52) circle (2pt);
    \end{tikzpicture}\vspace{-0.5cm}
    \caption{A converging and a diverging solution path.}
    \label{fig:solutionsAtInfinity}
  \end{figure}

  The two solution paths are
  \begin{align*}
    z_1: & \;\; [0,1) \longrightarrow \CC, \quad t\longmapsto 0, \\
    z_2: & \;\; [0,1) \longrightarrow \CC, \quad t\longmapsto \frac{1}{1 - t  },
  \end{align*}
  of which the first obviously converges, while the second diverges. Note that diverging paths can only appear if parameters occur in the coefficients of non-constant monomials, which is not the case in System~(\ref{thesystem}), see proof of Theorem~\ref{thm:generic}.
\end{example}

\section{Generic decoupled molecules with $(n,2)$ sites}\label{sec:genericDecouplesMolecules}

In this section, we show that a binding polynomial represents generically $4\cdot (n!)^3$ decoupled molecules with $(n,2)$ sites. Due to the complexity of the system of polynomial equations, the proof is split in two parts. First, we study a special class of decoupled molecules and their binding polynomials. In a second step, we study their implication to the generic case.

\subsection{Normalized molecules}

In this subsection, we restrict ourselves to a special class of decoupled molecules and their binding polynomials. This simplifies our system of equations and allows us to show that their binding polynomials generically represent $2n!$ molecules, each of multiplicity $2(n!)^{2}$.

\begin{definition}
  Recall that $\mathcal M$ consists pairs of molecules and their binding polynomials  (see Section~\ref{2.3}). We define the set of all \emph{normalized} molecules to be
  \[ \mathcal M_{\text{norm}}:=\left\{ (\underline g,\underline w;\underline a)\in\mathcal M\;\middle|
      \begin{array}{c}
        g_i=1 \text{ for } i=1,\ldots,n \text{ and } g_A=g_B=1 \\
        w_{i,A} w_{i,B}=1 \text{ for } i=1,\ldots,n
      \end{array}
    \right\}. \]
\end{definition}

\begin{lemma}\label{lem:normalized}
  The projection
  \[ \mathcal M\stackrel{\pi_{\text{norm}}}\longrightarrow (\CC^\ast)^n\times \CC^n, \quad (\underline g,\underline w;\underline a)\longmapsto (w_{1,A},\ldots,w_{n,A};a_{1,1},\ldots,a_{n,1})\]
  maps $\mathcal M_{\text{norm}}$ bijectively onto the affine variety $V$ cut out by
  \begin{equation}\label{eqs:normalized}
    \begin{aligned}
      a_{1,1} &= (w_{1,A}+\ldots+w_{n,A})+ \left(\frac{1}{w_{1,A}}+\ldots+\frac{1}{w_{n,A}}\right),\\[-0.25cm]
      &\;\,\vdots \\[-0.25cm]
      a_{n,1} &= w_{1,A}\cdots w_{n,A} + \frac{1}{w_{1,A}\cdots w_{n,A}}.
    \end{aligned}
  \end{equation}
  Moreover, any point on $V$ of multiplicity $1$ is the image of a point on $\mathcal M_{\text{norm}}$ of multiplicity $2(n!)^2$.
\end{lemma}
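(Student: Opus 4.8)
The plan is to prove the two assertions separately: the bijection onto $V$ is a direct substitution, whereas the multiplicity statement is a conservation-of-multiplicity argument built on a convenient change of coordinates. First I would substitute the normalization $g_i=g_A=g_B=1$ and $w_{i,B}=1/w_{i,A}$ into System~(\ref{thesystem}) and record that every coefficient except the $a_{k,1}$ collapses to a constant: $a_{k,0}=e_k(1,\ldots,1)=\binom nk$, $a_{0,1}=2$, $a_{0,2}=1$, and $a_{k,2}=e_k(w_{1,A}w_{1,B},\ldots,w_{n,A}w_{n,B})=e_k(1,\ldots,1)=\binom nk$, while $a_{k,1}=e_k(w_{\cdot,A})+e_k(w_{\cdot,B})=e_k(w_{\cdot,A})+e_k(1/w_{\cdot,A})$ is exactly the right-hand side of~(\ref{eqs:normalized}). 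Consequently a normalized molecule is completely determined by the tuple $(w_{1,A},\ldots,w_{n,A})\in(\CC^\ast)^n$, so $\pi_{\text{norm}}$ is injective on $\mathcal M_{\text{norm}}$ and its image is precisely the graph $V$; conversely each point of $V$ lifts to a normalized molecule by reading off $g=\mathbf 1$, $g_A=g_B=1$, $w_{i,B}=1/w_{i,A}$ and the determined constants, which gives surjectivity.

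For the multiplicity I would pass to the coordinates $r_i:=g_iw_{i,A}$ and $s_i:=g_iw_{i,B}$ (an isomorphism of tori, hence multiplicity-preserving) and rewrite the whole system through the marginal polynomials $Q_0(\Lambda_1)=\prod_i(1+g_i\Lambda_1)$, $Q_1(\Lambda_1)=g_A\prod_i(1+r_i\Lambda_1)+g_B\prod_i(1+s_i\Lambda_1)$ and $Q_2(\Lambda_1)=g_Ag_B\prod_i(1+\tfrac{r_is_i}{g_i}\Lambda_1)$, whose $\Lambda_1$-coefficients are the $a_{\cdot,0}$, $a_{\cdot,1}$ and $a_{\cdot,2}$ respectively. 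This exhibits a fibered (triangular) structure: $a_{\cdot,0}$ depends only on $g$, the pair $(a_{0,1},a_{0,2})$ only on $(g_A,g_B)$, the coefficients $a_{\cdot,2}$ pin down the multiset $\{r_is_i/g_i\}$, and finally $a_{\cdot,1}$ determines $(r,s)$.

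To compute the multiplicity at the normalized solution I would fix the normalized parameter $\underline a^\ast$ and perturb it to a generic $\underline a'$ nearby; by Theorem~\ref{thm:continuity} the sought multiplicity equals the number of solutions of the perturbed system that converge to the normalized one. I would then count the colliding branches stage by stage. The reciprocal roots of $\sum_k a'_{k,0}\Lambda_1^k$ are $n$ distinct values near $1$, so there are $n!$ labelings of $(g_1,\ldots,g_n)$, all limiting to $\mathbf 1$ --- this is exactly the confluent Vieta collision of Example~\ref{ex:vieta}. Likewise $(a'_{0,1},a'_{0,2})$ yields $2$ labelings of $\{g_A,g_B\}$ limiting to $(1,1)$, and the multiset $\{r_is_i/g_i\}$ consists of $n$ distinct values near $1$, contributing $n!$ pairings with the indices, all limiting to products equal to $1$. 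For each such configuration the remaining equations $a_{\cdot,1}$ degenerate to System~(\ref{eqs:normalized}); since $w^\ast_{\cdot,A}$ is a point of $V$ of multiplicity $1$, i.e.\ a simple solution of~(\ref{eqs:normalized}), the implicit function theorem yields a unique nearby solution in the $r$-variables, contributing a factor $1$. All $n!\cdot 2\cdot n!\cdot 1$ resulting solutions are distinct (the $g_i$, the assignment of $\{g_A,g_B\}$, and the pairings all differ) and converge to the single normalized solution, so its multiplicity is $2(n!)^2$.

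The main obstacle is making the stagewise count rigorous, i.e.\ justifying that the nearby solution set really is the product of the three independent collisions together with the final simple solve. The delicate point is the separation of the ``$\{r_is_i/g_i\}$-multiset'' stage from the final $(r,s)$-solve, because the $r_is_i/g_i$ are not coordinates; I would address this by verifying that the fibered structure above is genuinely triangular, so that conservation of multiplicity (Theorem~\ref{thm:continuity}) can be applied one parameter block at a time, and by checking that the Jacobian of the last stage in the $r$-variables agrees with the Jacobian of~(\ref{eqs:normalized}) at $w^\ast_{\cdot,A}$, so that simplicity on $V$ transfers to multiplicity $1$ in the final stage.
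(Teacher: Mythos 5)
Your argument rests on exactly the same decomposition as the paper's proof: both split System~(\ref{thesystem}) at a normalized point into the $(g_A,g_B)$-block of multiplicity $2$, the $(g_1,\ldots,g_n)$-block of multiplicity $n!$, the block determining the products $w_{i,A}w_{i,B}$ of multiplicity $n!$, and a final stage whose contribution is $1$ because the image point on $V$ is a simple solution of~(\ref{eqs:normalized}); the bijection part is the same direct substitution in both. Where you genuinely differ is in how the product $2\cdot n!\cdot n!$ is justified. The paper settles this in one line by citing a multiplicativity statement for intersection multiplicities, \cite[Proposition~1.29]{EisenbudHarris16}, whereas you perturb $\underline a^\ast$ to a generic nearby $\underline a'$, count $2(n!)^2$ distinct branches stage by stage, and invoke Theorem~\ref{thm:continuity}(3). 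Your route is more self-contained (it uses only tools already present in the paper) and the substitution $r_i=g_iw_{i,A}$, $s_i=g_iw_{i,B}$ together with the marginal polynomials $Q_0,Q_1,Q_2$ makes the triangular structure explicit, which the paper leaves implicit. The ``main obstacle'' you flag --- that the $a_{\cdot,2}$-stage and the $a_{\cdot,1}$-stage are coupled because the $r_is_i/g_i$ are not coordinates --- is a real subtlety, and it is precisely the point the paper's citation is meant to absorb. Your proposed repair does work: ordering variables as $(g;g_A,g_B;r,s)$ the full Jacobian is block lower triangular, the first two diagonal blocks are nonsingular at a generic perturbed branch, and the remaining $2n\times 2n$ block in $(r,s)$ can be handled by eliminating $s_i=g_it_i/r_i$ (the $t_i$ being the pairwise distinct roots determined by $a'_{\cdot,2}$) and observing that the reduced $r$-system converges to~(\ref{eqs:normalized}), whose Jacobian at $w^\ast_{\cdot,A}$ is invertible by the simplicity hypothesis. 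Completed along these lines, your argument is a valid and somewhat more elementary alternative to the paper's proof.
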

\begin{proof}
  The bijection follows directly from the conditions on $\mathcal M_{\text{norm}}$ and the equations of System (\ref{thesystem}): If $(\underline g,\underline w;\underline a)$ is normalized, then by definition $\underline g=(1,\ldots,1)$ and $w_{1,B} = w_{1,A}^{-1}$. Additionally, the following parameters are uniquely determined by the following equations of System (\ref{thesystem}):
  \begin{align*}
    a_{0,1} &= g_A + g_B,      &          a_{0,2} &= g_Ag_B, \\
    a_{1,0} &= g_1+\ldots+g_n, &          a_{1,2} &= g_Ag_B (g_1 w_{1,A}w_{1,B}+\ldots+g_n w_{n,A}w_{n,B}),\\
            &\;\,\vdots        &                  &\;\,\vdots \\
    a_{n,0} &= g_1\cdot \ldots\cdot g_n,& a_{n,2} &= g_Ag_B g_1 w_{1,A}w_{1,B}\cdots g_n w_{n,A}w_{n,B}.
  \end{align*}
  The multiplicity follows from the fact that:
  \begin{itemize}[leftmargin=*]
  \item any solution $(g_A,g_B;a_{0,j})$ to the two equations in the first row is of multiplicity $2$ (see Example~\ref{ex:vieta}),
  \item the solution $(g_1,\ldots,g_n;a_{i,0})=(1,\ldots,1;\binom{n}{i})$ to the latter equations in the first column is of multiplicity $n!$,
  \item given $g_A=g_B=g_i=1$, the solution $(w_{i,A},w_{i,B};a_{i,2})=(1,\ldots,1;\binom{n}{i})$ to the latter equations in the second column is of multiplicity $n!$.
  \end{itemize}
  and from the fact that the multiplicity of the entire system equals the product of the multiplicities of the three smaller systems in our case \cite[Proposition~1.29]{EisenbudHarris16}.
\end{proof}

\begin{proposition}\label{prop:degenerate}
  A normalized binding polynomial represents generically $2n!$ decoupled molecules, each of multiplicity $2(n!)^{2}$.
\end{proposition}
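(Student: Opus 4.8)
The plan is to reduce, via Lemma~\ref{lem:normalized}, to a count of points on the variety $V$. By that lemma the projection $\pi_{\mathrm{norm}}$ identifies normalized molecules with points of $V$, and any point of $V$ of multiplicity $1$ lifts to a normalized molecule of multiplicity $2(n!)^2$. Thus it suffices to prove that, for a generic choice of parameters $(a_{1,1},\ldots,a_{n,1})$, the system~(\ref{eqs:normalized}) has exactly $2n!$ solutions in $(\CC^\ast)^n$, each of multiplicity $1$; the multiplicity $2(n!)^2$ of each associated molecule is then immediate.

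First I would pass to the elementary symmetric functions $e_k:=e_k(w_{1,A},\ldots,w_{n,A})$. Using the identity $e_k(w_{1,A}^{-1},\ldots,w_{n,A}^{-1})=e_{n-k}/e_n$, the equations~(\ref{eqs:normalized}) become
\[ a_{k,1}=e_k+\frac{e_{n-k}}{e_n},\qquad k=1,\ldots,n. \]
The structural point is that this system is essentially linear once $e_n$ is fixed. Its top equation is $e_n^2-a_{n,1}e_n+1=0$, with two roots $c_\pm$ satisfying $c_+c_-=1$; these are distinct and different from $\pm 1$ for generic $a_{n,1}$. Fixing $e_n=c\in\{c_+,c_-\}$, the remaining equations couple $e_k$ with $e_{n-k}$ through the $2\times 2$ systems
\[ e_k+c^{-1}e_{n-k}=a_{k,1},\qquad e_{n-k}+c^{-1}e_k=a_{n-k,1}, \]
of determinant $1-c^{-2}\neq 0$ (and, for even $n$, a single invertible equation in $e_{n/2}$). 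Hence each choice of $c$ determines $(e_1,\ldots,e_n)$ uniquely, yielding exactly two solutions in symmetric coordinates, interchanged by the involution $w_{i,A}\mapsto w_{i,A}^{-1}$.

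Next I would lift back to the unknowns $w_{i,A}$. By Vieta's formula (Example~\ref{ex:vieta}), each symmetric solution is the coefficient vector of a monic degree-$n$ polynomial whose roots are the $w_{i,A}$; for generic parameters these roots are distinct and nonzero, so each symmetric solution produces an $S_n$-orbit of $n!$ distinct tuples. Since the two values $c_\pm$ of $e_n=\prod_i w_{i,A}$ differ, the two orbits have distinct products and hence are disjoint, giving $2n!$ distinct solutions. For the multiplicities, I would view $V$ as the graph of the dominant, generically $2n!$-to-one map $\Phi:(\CC^\ast)^n\to\CC^n$, $\underline w\mapsto(a_{1,1},\ldots,a_{n,1})$; by generic smoothness in characteristic zero the generic fibre of $\Phi$ is \'etale, so each of these $2n!$ points has multiplicity $1$. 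Lemma~\ref{lem:normalized} then turns each into a normalized molecule of multiplicity $2(n!)^2$.

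I expect the genericity bookkeeping in the last step to be the main obstacle. One must ensure simultaneously that $c_+\neq c_-$, that $c_\pm\neq\pm 1$, that the reconstructed polynomials have $n$ distinct nonzero roots, and that the generic fibre of $\Phi$ is reduced. Each of these failures is the vanishing of an explicit discriminant-type polynomial in $(a_{1,1},\ldots,a_{n,1})$ and so defines a proper closed subset of parameter space; the real work is confirming that none of these polynomials vanishes identically, which is exactly what the explicit solution constructed above certifies.
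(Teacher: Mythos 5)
Your proposal follows essentially the same route as the paper: reduce via Lemma~\ref{lem:normalized}, pass to the elementary symmetric functions (the paper's $\mu_1,\ldots,\mu_n$), extract the quadratic $\mu_n^2-a_{n,1}\mu_n+1=0$ that produces the two branches, and apply Vieta's formula to obtain $n!$ ordered tuples per branch, with simplicity handled by a genericity/Jacobian argument. The only difference is cosmetic: where the paper exhibits the dominant $2{:}1$ rational map $\varphi$ and checks that its Jacobian is invertible at $(1,\ldots,1)$, you invert the induced linear system in the $e_k$ directly for each root $c_{\pm}$, which yields the same count.
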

\begin{proof}
  By Lemma~\ref{lem:normalized}, it suffices to show that System (\ref{eqs:normalized}) has $2n!$ simple solutions for generic $\underline a = (a_{1,1},\ldots,a_{n,1})\in \CC^n$, or rather for $(a_{1,1},\ldots,a_{n,1})\in\mathcal U$ for some Euclidean open subset $\mathcal U\subseteq \CC^n$. For the sake of simplicity, we abbreviate $a_i:= a_{i,1}$ and $w_i:=w_{i,A}$ for $i=1,\ldots,n$.
  Next, we introduce $n$ new variables $\mu_1,\ldots,\mu_n$ and consider the following equivalent system of $2n$ equations in the $2n$ variables $\mu_1,\ldots,\mu_n,w_1\ldots,w_n$:
  \begin{align}
    \tag{1}\mu_1 &= (w_1+\ldots+w_n)\\[-0.25cm]
    \tag*{\vdots\;\;}&\;\;\vdots \\[-0.25cm]
    \tag{n}\mu_n &= w_1\cdots w_n\\
    \tag{-1}a_1 -\mu_1&= \left(\frac{1}{w_1}+\ldots+\frac{1}{w_n}\right)\\[-0.25cm]
    \tag*{\vdots\;\;}&\;\;\vdots \\[-0.25cm]
    \tag{-n}a_n -\mu_n&= \frac{1}{w_1\cdots w_n}
  \end{align}

  Let $\mathcal N$ be the variety cut out by the system above and let $\pi_\mu$ and $\pi_a$ denote the three projections onto  $\mu_i$ and $a_i$ respectively.

  \begin{center}
    \begin{tikzpicture}
      \node[anchor=base west] (N) at (0,0) {$\mathcal N$};
      \node[anchor=base west] (subseteq) at (N.base east) {$\subseteq$};
      \node[anchor=base west] (Cw) at (subseteq.base east) {$(\CC^\ast)^{|\{w_1,\ldots,w_n\}|}$};
      \node[anchor=base west] (times1) at (Cw.base east) {$\times$};
      \node[anchor=base west] (Ca) at (times1.base east) {$\CC^{|\{a_1,\ldots,a_n\}|}$};
      \node[anchor=base west] (times2) at (Ca.base east) {$\times$};
      \node[anchor=base west] (Cm) at (times2.base east) {$\CC^{|\{\mu_1,\ldots,\mu_n\}|}$};
      \node[anchor=base west,yshift=-1.5cm] (CaLow) at (Ca.base west) {$\CC^{|\{a_1,\ldots,a_n\}|}$};
      \node[anchor=base west,xshift=1cm,yshift=-1.5cm] (CmLow) at (Cm.base west) {$\CC^{|\{\mu_1,\ldots,\mu_n\}|}$};
      \draw[->] (Ca) -- node[right] {$\pi_a$} (CaLow);
      \draw[->] (Cm) -- node[right] {$\pi_\mu$} (CmLow);
      \draw[->,dashed] (CmLow) -- node[above] {$\varphi$} (CaLow);
    \end{tikzpicture}
  \end{center}
  We will construct a dominant (i.e. its image is Zariski dense), 2:1 rational map
  \[ \varphi: \CC^n \dashrightarrow \CC^n, \quad (w_1,\ldots,w_n) \longmapsto (a_1,\ldots,a_n), \]
  that maps $\underline \mu$ to the unique $\underline a$ for which some $\underline w$ exists such that $(\underline w;\underline a,\underline \mu)\in\mathcal N$. In short, the diagram above commutes. The image of the complement $\CC^n\setminus\text{Disc}_x(x^n+\mu_1 x^{n-1} + \ldots + \mu_n)$ will then contain an open set $\mathcal U\subseteq \CC^n$, and for any $\underline a\in\mathcal U$ the system will have $2n!$ solutions: $2$~solutions in $\underline \mu$, both outside the discriminant, and consequently also $n!$ solutions in $\underline w$ for each $\underline\mu$.

  To construct $\varphi$, observe that combining equations $(-n)$ and $(n)$ and obtain:
  \[ a_n -\mu_n = \frac{1}{w_1\cdots w_n} = \frac{1}{\mu_n}, \]
  which is equivalent to
  \begin{equation*}
    \mu_n^2-a_n\cdot \mu_n+1 = 0 \quad\text{or}\quad a_n=\mu_n+\frac{1}{\mu_n}.
  \end{equation*}
  Moreover, multiplying equation $(-(n-1))$ with $x_1\cdots x_n$ yields
  \[ (a_{n-1}-\mu_{n-1})\cdot \underbrace{x_1\cdots x_n}_{\overset{\text{Eq.}(n)}{=} \mu_n} = \underbrace{x_1+\ldots+x_n}_{\overset{\text{Eq.}(1)}{=} \mu_1}, \]
  or, more generally, by multiplying Equation $(-i)$ with $x_1\cdots x_n$:
  \begin{equation*}
    (a_i-\mu_i)\cdot \mu_n = \mu_{n-i} \quad\text{or}\quad a_i = \mu_i - \frac{\mu_{n-i}}{\mu_n}\quad \text{for } i=1,\ldots,n-1.
  \end{equation*}
  Set
  \begin{align*}
    \varphi: & & \CC^n & \dashrightarrow \CC^n,\\
    & & (\mu_1,\ldots,\mu_n) & \longmapsto \left(\mu_1-\frac{\mu_{n-1}}{\mu_n},\ldots,\mu_{n-1}-\frac{\mu_1}{\mu_n},\mu_n-\frac{1}{\mu_n}\right)
  \end{align*}
  By construction, $\varphi$ is $2:1$ and commutes with the projections $\pi_\mu$, $\pi_a$. Moreover, it is dominant as its Jacobian,
  \[ J(\phi) =
    \begin{pmatrix}
      1 & & & & -1 \\
        & 1 & & -1 \\
        & & \ddots \\
        & -1 & & 1 \\
        \frac{\mu_{n-1}}{\mu_n^2} & \frac{\mu_{n-2}}{\mu_n^2} & \cdots & \frac{\mu_{1}}{\mu_n^2} & 1+ \frac{1}{\mu_n^2}
      \end{pmatrix} \]
    is invertible at $(1,\ldots,1)$.
\end{proof}

\subsection{A generic decoupled sites representation}

In this subsection, we will infer from Proposition~\ref{prop:degenerate} the number of molecules a binding polynomial generically represents.

\begin{figure}[ht]
  \centering
  \begin{tikzpicture}[xscale=1.25]
    \draw[->] (-4,-0.25) -- (4,-0.25);
    \draw (0,-0.1) -- (0,-0.4);
    \node[anchor=north] at (0,-0.4) {$\phantom{{}^\ast}\underline a^\ast$};
    \node[anchor=west] at (-0.85,-0.25) {$($};
    \node[anchor=east] at (0.85,-0.25) {$)$};
    \draw (0.5,-0.15) -- (0.5,-0.35);
    \node[anchor=south] at (0.5,-0.15) {$\phantom{{}^\ast}\underline a'$};

    \draw[->] (-4,0) -- (-4,6);
    \draw[->] (-3.75,2.25) -- node[right] {$\pi_{\underline a}$} (-3.75,1);
    \draw (-4.1,4.91) -- (-3.9,4.91);
    \node[anchor=east] at (-4.1,4.91) {$(\underline g^\ast, \underline w^\ast)$};

    \node[anchor=west] at (-3.5,4.5) {$\textcolor{red}{\mathcal M_{\text{norm}}}\subseteq \mathcal M$};

    \draw plot[smooth, tension=1] coordinates { (-3.5,5.5) (-0.175,4.75) (-1,3.75) (0,3) (-1,2.25) (-0.175,1.25) (-3.5,0.5)};
    \draw plot[smooth, tension=1] coordinates { (3.5,5.5) (0.175,4.75) (1,3.75) (0,3) (1,2.25) (0.175,1.25) (3.5,0.5)};

    \draw[loosely dotted] (0,0) -- (0,6);
    \fill[red] (0,1.475) circle (0.5mm);
    \fill[red] (0,3) circle (0.5mm);
    \fill[red] (0,4.525) circle (0.5mm);
    \node[anchor=east,red] at (-1,3) {$2n!$ pts};
    \draw[white,line width=3pt,shorten <= 2pt,shorten >= 5pt] (-1.1,2.85) -- (0,1.475);
    \draw[red,->,shorten <= 2pt,shorten >= 5pt] (-1.1,2.85) -- (0,1.475);
    \draw[red,->,shorten <= 2pt,shorten >= 5pt] (-1.1,3) -- (0,3);
    \draw[white,line width=3pt,shorten <= 2pt,shorten >= 5pt] (-1.1,3.15) -- (0,4.525);
    \draw[red,->,shorten <= 2pt,shorten >= 5pt] (-1.1,3.15) -- (0,4.525);

    \draw[dashed] (0,4.535) circle (8.5mm);
    \node[anchor=west] at (0.5,5.5) {$\subseteq U$};
    \fill (0.5,4.91) circle (0.5mm);
    \fill (0.5,4.18) circle (0.5mm);
    \node[anchor=west] at (1,4.525) {$2(n!)^2$ pts each};
    \draw[->,shorten <= 2pt,shorten >= 5pt] (1.1,4.535) -- (0.5,4.91);
    \draw[->,shorten <= 2pt,shorten >= 5pt] (1.1,4.515) -- (0.5,4.18);
  \end{tikzpicture}\vspace{-0.5cm}
  \caption{Pertubation of normalized binding polynomials.}
  \label{fig:pertubationOfNormalizedBindingPolynomials}
\end{figure}
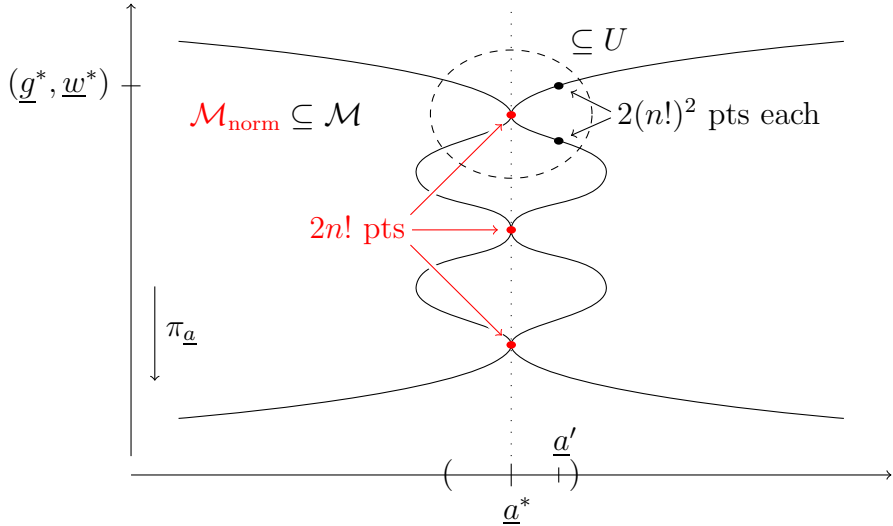

\begin{theorem}\label{thm:generic}
  A binding polynomial of bidegree $(n,2)$ represents generically $4(n!)^3$ decoupled mole\-cules with $(n,2)$ sites. These come in $2(n!)^2$ classes modulo the $S_n\times S_2$ action that corresponds to relabelling of the sites.
\end{theorem}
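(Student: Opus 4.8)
The plan is to transfer the count supplied by Proposition~\ref{prop:degenerate} at a special \emph{normalized} parameter to a generic parameter by conservation of multiplicity (Theorem~\ref{thm:continuity}). Fix a generic normalized binding polynomial $\underline a^\ast$, that is, one with $a_{i,0}=\binom{n}{i}$ and $a_{i,2}=\binom{n}{i}$ for $i=1,\ldots,n$, with $a_{0,1}=2$ and $a_{0,2}=1$, and with $a_{1,1},\ldots,a_{n,1}$ chosen generically; these are exactly the coefficients produced by a normalized molecule.

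The first step is to show that \emph{every} solution of System~(\ref{thesystem}) over $\underline a^\ast$ is normalized, so that Proposition~\ref{prop:degenerate} accounts for the whole fibre. Applying Vieta's formula (Example~\ref{ex:vieta}) blockwise: the equations $a_{i,0}=e_i(g_1,\ldots,g_n)=\binom{n}{i}$ force $(g_1,\ldots,g_n)$ to be the roots of $(t-1)^n$, hence $g_i=1$ with multiplicity $n!$; the pair $a_{0,1}=g_A+g_B=2$, $a_{0,2}=g_Ag_B=1$ forces $g_A=g_B=1$ with multiplicity $2$; and, after substituting these values, the equations $a_{i,2}=e_i(w_{1,A}w_{1,B},\ldots,w_{n,A}w_{n,B})=\binom{n}{i}$ force every product $w_{i,A}w_{i,B}=1$ with multiplicity $n!$. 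The remaining equations $a_{i,1}$ cut out the variety $V$ of Lemma~\ref{lem:normalized}, which by Proposition~\ref{prop:degenerate} has $2n!$ simple points for generic $a_{1,1},\ldots,a_{n,1}$. As these three blocks are independent, their multiplicities multiply (as in the proof of Lemma~\ref{lem:normalized}, via \cite[Proposition~1.29]{EisenbudHarris16}), so each of the $2n!$ normalized molecules over $\underline a^\ast$ is an isolated solution of multiplicity $n!\cdot 2\cdot n! = 2(n!)^2$, and the total multiplicity of the fibre over $\underline a^\ast$ is $2n!\cdot 2(n!)^2 = 4(n!)^3$.

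The second step is the deformation to a fully generic parameter $\underline a'$ near $\underline a^\ast$ (perturbing all coordinates, not merely $a_{\bullet,1}$, which would leave us on the normalized locus). By Theorem~\ref{thm:continuity}(3), each of the $2n!$ isolated solutions of multiplicity $2(n!)^2$ accounts for solutions over $\underline a'$ of the same total multiplicity $2(n!)^2$; summing over the $2n!$ molecules yields total multiplicity $4(n!)^3$, as depicted in Figure~\ref{fig:pertubationOfNormalizedBindingPolynomials}. For generic $\underline a'$ every solution is simple (the collision locus being a proper subvariety), so this gives exactly $4(n!)^3$ molecules. The step I expect to be the main obstacle is to certify that this local count is the \emph{global} generic count, i.e.\ that no extra solution over $\underline a'$ arrives from infinity as $\underline a'\to\underline a^\ast$. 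This is precisely the degeneration ruled out by the fact that in $f$ the parameters $a_{i,j}$ occur only in the constant monomials $-a_{i,j}$ (cf.\ the discussion of diverging solution paths): no solution path diverges, the number of solutions counted with multiplicity is therefore constant across fibres, and $4(n!)^3$ is the number of decoupled molecules represented by a generic binding polynomial of bidegree $(n,2)$.

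It remains to count classes. The relabelling action of $S_n\times S_2$ has order $n!\cdot 2! = 2\,n!$ and permutes the solutions within each fibre. It is generically free: a nontrivial $\sigma\in S_n$ fixing a molecule would force $g_i=g_{\sigma(i)}$ for some $i\neq\sigma(i)$, and the transposition in $S_2$ fixing a molecule would force $g_A=g_B$ together with $w_{i,A}=w_{i,B}$ for all $i$; neither coincidence occurs for the molecules over a generic $\underline a'$, whose energies are pairwise distinct and generic. Hence every orbit has exactly $2n!$ elements, and the $4(n!)^3$ molecules fall into $4(n!)^3/(2n!) = 2(n!)^2$ classes, as claimed.
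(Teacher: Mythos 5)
Your argument is correct and follows essentially the same route as the paper: it degenerates to a generic normalized binding polynomial, invokes Proposition~\ref{prop:degenerate} to get $2n!$ solutions of multiplicity $2(n!)^2$, transfers the count to a nearby generic parameter via Theorem~\ref{thm:continuity}, and rules out solutions escaping to infinity by the observation that the parameters $a_{i,j}$ occur only as constant terms. The only differences are that you make explicit two points the paper leaves implicit --- the Vieta-type argument that every solution over a normalized parameter is itself normalized, and the generic freeness of the $S_n\times S_2$ action needed for the class count --- which is a welcome addition rather than a deviation.
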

\begin{proof}
  It suffices to show the claim in a euclidean open set. For that consider a generic normalized binding polynomial $\underline a^\ast \in \CC^{(n+1)(2+1)-1}$. Proposition~\ref{prop:degenerate} states that there are $2n!$ solutions $(\underline g^*,\underline w^*;\underline a^*)$ to $f(\underline g,\underline w;\underline a^\ast)=0$ of multiplicity $2(n!)^2$. Applying Theorem~\ref{thm:continuity} to each of the solutions, we obtain an open subset $U\subseteq X\times Y$, where $X:=(\mathbb{C}^*)^{n+2} \times (\mathbb{C}^*)^{n\cdot 2}$, $Y:=\mathbb{C}^{(n+1)(2+1)-1}$, such that
  \begin{itemize}[leftmargin=*]
  \item for any $\underline a'\in \pi_{\underline a}(U)$, $U\cap X \times \{a'\})$ has only isolated solutions of $f(\underline g, \underline w;\underline a')=0$,
  \item the sum of the multiplicities of those isolated solutions is $4(n!)^3$.
  \end{itemize}
  It remains to show that $U$ contains all isolated solutions of $f(\underline g, \underline w;\underline a')=0$ for $\underline a'\in \pi_{\underline a}(U)$ and that the solutions are simple. Both follow from the fact that our parameters are exactly the constant terms, i.e. we can regard $\mathcal M$ as the graph of the polynomial map
  \[ h:X \longrightarrow Y, \quad (\underline g,\underline w) \longmapsto
    \begin{pmatrix}
      g_1+\ldots+g_n \\
      \vdots \\
      \vdots
    \end{pmatrix}, \]
  so that $f(\underline g,\underline w;\underline a)=0$ is equivalent to $h(\underline g,\underline w)=\underline a$. Fix $a'\in\pi_{\underline a}(U)$ and a path
  \[ \phi:[0,1]\rightarrow Y,\quad \phi(0) = a^* \text{ and } \phi(1) = a'. \]

  To see that $U$ contains all solutions to $f(\underline g, \underline w;\underline a')=0$, observe that any solution $(g',w';a')$ has a solution path
  \begin{align*}
    z:(0,1]\longrightarrow X \text{ with } z(1)=(g',w')
  \end{align*}
  such that $f(z(t);\phi(t))=0$ for all $t\in (0,1]$. As $\lim_{t\rightarrow 0} h(z(t)) = \lim_{t\rightarrow 0} \phi(t) = \underline a^\ast$ and $h$ is polynomial, $\lim_{t\rightarrow 0} z(t)$ has to converge. Therefore $(\lim_{t\rightarrow 0} z(t),\underline a^*)$ is one of our $2n!$ solutions, which implies $(g',w';a')\in U$.

  To see that solutions are generically simple note that a solution $(g',w';a')$ to \linebreak $f(\underline g,\underline w;\underline a')=0$ is singular if and only if the point $(g',w')$ is a critical point of $h$. Hence, any solution in the following open set will be simple
  \[ \mathcal U :=  U\cap\pi_{\underline a}^{-1}(Y\setminus S), \]
  where $S:= \{ h(g',w')\in \CC^{\{a_{i,j}\}} \mid (g',w') \text{ critical point}\}$ consists of the images of all critical points of $h$.
\end{proof}

\section{Further experimental results}\label{sec:furtherExperimentalResults}

In this section, we provide some experimental results for $(n,2)$ and beyond. For simplicity, we will use randomly chosen $\underline a \in\CC^{(n+1)(2+1)-1}$. Moreover, we will also fix a choice of $g_{S_1},\ldots,g_{S_{n_1}},g_{T_1},\ldots,g_{T_{n_2}}$ to factor out the natural $S_{n_1}\times S_{n_2}$ action on the roots of System (\ref{oursystem}), see Section~\ref{sec:explicitSolutions}.

All computations are done using one of the following three programs:
\begin{description}[labelwidth=5em,leftmargin =\dimexpr\labelwidth+\labelsep\relax, font=\sffamily\mdseries]
\item[\textsc{bertini}\cite{bertini}] A solver for polynomial equations using numerical algebraic geometry. It has built-in features for parallel path-tracking, which proved to be particularly useful for big examples.
\item[\textsc{gfan}\cite{gfan}] A software package for computing Gr\"obner fans and tropical varieties. It features a new algorithm for computing mixed volumes using tropical homotopy methods \cite{Jensen16}.
\item[\textsc{Singular}\cite{Singular}] A computer algebra system for polynomial computations, with special emphasis on commutative and non-commutative algebra, algebraic geometry, and singularity theory.
\end{description}
Code, tutorials and other auxiliary files for the computations will soon be made available under \texttt{software.mis.mpg.de}.

\subsection{Explicit solutions for $(3,2)$}\label{sec:explicitSolutions}

Consider the following equations of System (\ref{thesystem}) for $n=3$:
\begin{align*}
  a_{1,1} &= g_A (g_1w_{1,A}+g_2w_{2,A}+g_3w_{3,A})+ g_B (g_1w_{1,B}+g_2w_{2,B}+g_nw_{3,B}),\\
  a_{2,1} &= g_A (g_1g_2w_{1,A}w_{2,A}+g_1g_3w_{1,A}w_{3,A}+g_2g_3w_{2,A}w_{3,A})\\
          &\qquad +g_B (g_1g_2w_{1,B}w_{2,B}+g_1g_3w_{1,B}w_{3,B}+g_2g_3w_{2,B}w_{3,B}), \\
  a_{3,1} &= g_A g_1 g_2 g_3 w_{1,A} w_{2,A} w_{3,A} + g_B g_1 g_2 g_3 w_{1,B} w_{2,B} w_{3,B},\\
  a_{1,2} &= g_Ag_B (g_1 w_{1,A}w_{1,B}+ g_2 w_{2,A}w_{2,B} +g_3 w_{3,A}w_{3,B}),\\
  a_{2,2} &= g_Ag_B (g_1 g_2 w_{1,A}w_{1,B} w_{2,A}w_{2,B}+ g_1 g_3 w_{1,A}w_{1,B} w_{3,A}w_{3,B} + g_2 g_3 w_{2,A}w_{2,B} w_{3,A}w_{3,B}), \\
  a_{3,2} &= g_Ag_B g_1 g_2 g_3 w_{1,A}w_{1,B}w_{2,A}w_{2,B} w_{3,A}w_{3,B}.
\end{align*}
Choosing
\begin{align*}
  &g_1 = 2, \quad g_2 = 3, \quad g_3 = 5, \quad g_A = 11, \quad g_B = 13, \\
  &a_{1,1} = 71, \quad a_{2,1}=73, \quad a_{3,1}=79, \quad a_{1,2}=101,\quad  a_{2,2}=103, \quad a_{3,2}=107,
\end{align*}
the system then simplifies to
\begin{align*}
  71 &= 11 (2w_{1,A}+3w_{2,A}+5w_{3,A})+ 13 (2w_{1,B}+3w_{2,B}+5w_{3,B}),\\
  73 &= 11 (6w_{1,A}w_{2,A}+10w_{1,A}w_{3,A}+15w_{2,A}w_{3,A})\\
          &\qquad +13 (6w_{1,B}w_{2,B}+10w_{1,B}w_{3,B}+15w_{2,B}w_{3,B}), \\
  79 &= 330 w_{1,A} w_{2,A} w_{3,A} + 390 w_{1,B} w_{2,B} w_{3,B},\\
  101 &= 143 (2 w_{1,A}w_{1,B}+ 3 w_{2,A}w_{2,B} +5 w_{3,A}w_{3,B}),\\
  103 &= 143 (6 w_{1,A}w_{1,B} w_{2,A}w_{2,B}+ 10 w_{1,A}w_{1,B} w_{3,A}w_{3,B} + 15 w_{2,A}w_{2,B} w_{3,A}w_{3,B}), \\
  107 &= 4290 w_{1,A}w_{1,B}w_{2,A}w_{2,B} w_{3,A}w_{3,B}.
\end{align*}
Using \textsc{bertini}, we see that it has 72 roots, all of which are non-real and simple, 12 of which have strictly positive real component. Figure~\ref{fig:solution32} shows a pair of complex conjugate solutions with lexicographically largest real part. The ordering on the variables is $w_{1,A}$, $w_{1,B}$, $w_{2,A}$, $w_{2,B}$, $w_{3,A}$, $w_{3,B}$, so that the first lines corresponds to the real and imaginary part of $w_{1,A}$.

\begin{figure}[ht]
  \begin{tabular}{c}
\begin{lstlisting}[basicstyle=\scriptsize]
0.733175658157242746563365475886e0 -0.525124949875722284087132912860e0
0.261871644858051814095937009460e0 -0.442937573433368024261374882297e0
0.175843060588207991285330446970e0 0.331651692189479173140340537837e0
0.691545419943605655254684211883e0 0.448271194339440245246537764856e0
0.154572711574605557323732099093e0 -0.483019502079697923238189339927e0
0.104413557544869576753196043395e0 0.326278707683474805854874986793e0

0.733175658157242746563365475898e0 0.525124949875722284087132912872e0
0.261871644858051814095937009460e0 0.442937573433368024261374882309e0
0.175843060588207991285330446974e0 -0.331651692189479173140340537843e0
0.691545419943605655254684211896e0 -0.448271194339440245246537764868e0
0.154572711574605557323732099096e0 0.483019502079697923238189339940e0
0.104413557544869576753196043398e0 -0.326278707683474805854874986799e0
\end{lstlisting}
  \end{tabular}\vspace{-0.25cm}
\caption{Two complex conjugate solutions for $(3,2)$.}\label{fig:solution32}
\end{figure}

\subsection{Mixed volumes}\label{sec:mixedVolumes}

The Newton polytope of a polynomial is the convex hull of all exponent vectors of all monomials with non-zero coefficient. Given a polynomial system $f_1,\ldots,f_N$ in $N$ variables, the \emph{mixed volume} of their Newton polytopes is a number that equals the number of roots provided the non-zero coefficients are generic. This is known as the Bernstein-Khovanskii-Kushnirenko Theorem \cite{Bernstein75}.

Figure \ref{fig:mixedVolumeAndMore} shows a table with the mixed volume for various $(n_1,n_2)$ computed using \textsc{gfan}. We see that the number for $(n_1,1)$ and $(n_1,2)$ corresponds with the theoretical results. Sadly, there is no apparent pattern for $(n_1,n_2)$ with $n_2>2$.

Note that there exist criteria on so-called Newton-degeneracy which guarantee that the mixed volume equals the number of roots \cite{BS95}. However, due to the high-dimensionality of the Newton polytopes, these were infeasible to verify for the cases of interest such as $(4,3)$.

\begin{figure}[ht]
  \centering
  \setlength{\tabcolsep}{8pt}
  \begin{tabular}{c|rrrrrr}
      & 1 & 2 & 3    & 4        & 5        & 6 \\[3pt] \hline
    1 & \textcolor{red}{1} & \textcolor{red}{2} & \textcolor{red}{6}    & \textcolor{red}{24}       & \textcolor{red}{120}      & \textcolor{red}{720} \\[3pt]
    2 &   & \textcolor{red}{8} & \textcolor{red}{72}   & \textcolor{red}{1\! 152}     & \textcolor{blue}{28\! 800}    & 1\! 036\! 800 \\[3pt]
    3 &   &   & \textcolor{red}{1\! 944} & \textcolor{blue}{162\! 432}   & 24\! 624\! 000 & 1\! 349\! 713\! 408 \\[3pt]
    4 &   &   &      & 52\! 862\! 976 & -        & - \\[3pt]
    5 &   &   &      &          & -        & - \\[3pt]
    6 &   &   &      &          &          & -
  \end{tabular}\vspace{-0.25cm}
  \caption{mixed volumes of the Newton polytopes of (\ref{oursystem})}
  \label{fig:mixedVolumeAndMore}
\end{figure}

\subsection{Counting solutions with multiplicity}

Given a zero-dimensional polynomial ideal $I\unlhd \CC[x]$, the dimension of $\CC[x]/I$ as a $\CC$-vector space equals the number of solutions counted with multiplicity. It can be easily read off any Gr\"obner basis, but computing the Gr\"obner basis itself is a highly challenging task \cite[Section 1.8.5]{GP08}. In Figure~\ref{fig:mixedVolumeAndMore}, red numbers mark all cases for which Gr\"obner bases were computeable in \textsc{Singular}. The respective vector space dimensions (computed using the \textsc{Singular} command \texttt{vdim}) all coincided with the mixed volume.

\subsection{Explicit solutions for $(5,2)$ and $(4,3)$}

For the cases $(5,2)$ and $(4,3)$, highlighted blue in Figure~\ref{fig:mixedVolumeAndMore}, we also tried to compute explicit roots using \textsc{bertini}. However, numerical instabilities arose in both cases during the computation, so that the roots computed are not complete.

For $(5,2)$ we obtained $28737$ roots, $63$ short or $99.8\%$ of the proven $28800$ roots. For $(4,3)$ we obtained $156966$ roots, $5466$ short or $97\%$ of the conjectured $162432$ roots.
\section{Open questions}\label{sec:summaryAndOpenQuestion}

We close with three open questions.

\begin{question}
What is the number of solutions for $(n_{1},n_{2})$?
\end{question}

For binding polynomials of bidegree $(n,1)$ and $(n,2)$, the number of decoupled molecules is given by relatively simple expressions. Assuming that the mixed volumes of the Newton polytopes equals the number of solutions, Table~\ref{fig:mixedVolumeAndMore} indicates a more complicated pattern in the number of decoupled molecules for $(n,3)$. The smallest interesting example is the case $(4,3)$ for which we conjecture that the number equals $162432$.

\begin{question}
 How many solutions with real, positive values for $g_{i}$ and $w_{i,j}$ exist?
\end{question}

For univariate binding polynomials, the existence of complex roots suggests that the system does strongly interact and cannot be represented by a real decoupled system. In particular it is an indicator for ``cooperativity'' \cite{Martini16model}. It is neither clear how this concept can be translated to decoupled molecules for two types of ligands nor which characteristic different decoupled molecules share. To develop an understanding, it would be helpful to determine the number of real, positive solutions for small examples.

\begin{question}
  Find an algorithm to compute the minimal interaction energy that a molecule with prescribed binding polynomial has.
\end{question}

For univariate binding polynomials, a quantitative measure for ``cooperativity'' is mapping the polynomial to the minimal interaction energy which is required to generate it \cite{martini17measure}. In more detail, the \textit{norm} of a molecule is the product of all the \textit{absolutes} of its interaction energies,
\begin{align*}
|\M| = \prod |w_{i,j}|, \text{ where } |w| := \max\left(w,w^{-1}\right),
\end{align*}
while the \textit{norm} of a polynomial $\Phi$ is the minimal norm of all molecules that give rise to this polynomial. How can we calculate $|\Phi|$? It would be interesting to investigate whether the machinery that has been developed for Euclidean distance degree is applicable in our setting \cite{DHOST16}.

\bibliography{bp}
\bibliographystyle{plain}

\end{document}